%% file: pp_arxiv.tex
\documentclass[12pt,reqno]{article}
\usepackage[usenames,dvipsnames]{xcolor}
\usepackage{amsmath,mathtools,tikz,pgfplots,soul}

\input{working_arxiv}

\usepackage[T1]{fontenc}
\usepackage{newtxtext}
\usepackage{lmodern,newtxtext,newtxmath}
\usepackage[cal=cm]{mathalfa}

\usepackage{graphicx} % For \raisebox
\usepackage{amsmath}  % For \text
\usepackage{scalerel} % For better sizing

\DeclareRobustCommand{\textcircled}[1]{%
  \scalerel*{%
    \raisebox{-8pt}{\textcircledold{\normalsize #1}}%
  }{X}%
}

\newcommand{\textcircledold}[1]{%
  \ooalign{%
    \hfil#1\hfil\crcr
    \mathhexbox20D%
  }%
}

\usepackage{booktabs,array,tablefootnote,subcaption} 
\usepackage[titletoc]{appendix}
\newcolumntype{C}[1]{>{\centering\arraybackslash}m{#1}}

\makeatletter
\newcommand{\vast}{\bBigg@{4}}
\newcommand{\Vast}{\bBigg@{5}}
\makeatother

\let\Pr\relax
\DeclareMathOperator{\Pr}{Pr}

\DeclareMathOperator{\MPC}{MPC}

\newcommand{\bm}{\mathbf{m}}

\renewcommand{\bs}{\mathbf{s}}

\newcommand\calo{
  \mathchoice
    {\scalebox{.85}{$\scriptstyle\mathcal{O}$}}% \displaystyle
    {\scalebox{.85}{$\scriptstyle\mathcal{O}$}}% \textstyle
    {{\scriptscriptstyle\mathcal{O}}}% \scriptstyle
    {\scalebox{.7}{$\scriptscriptstyle\mathcal{O}$}}%\scriptscriptstyle
  }

\definecolor{azure(colorwheel)}{rgb}{0.0, 0.5, 1.0}
\definecolor{brandeisblue}{rgb}{0.0, 0.44, 1.0}
\definecolor{ceruleanblue}{rgb}{0.16, 0.32, 0.75}
\definecolor{airforceblue}{rgb}{0.36, 0.54, 0.66}
\definecolor{bleudefrance}{rgb}{0.19, 0.55, 0.91}
\definecolor{darkspringgreen}{rgb}{0.09, 0.45, 0.27}
\definecolor{rulecolor}{rgb}{0.13, 0.35, 0.5}

\usepgfplotslibrary{fillbetween,decorations.softclip}
\usetikzlibrary{arrows.meta}
\usetikzlibrary{patterns,calc}

\newcommand{\papertitle}{{\bf Pareto-Improving Pricing: Why 3 Is Better Than 2}\footnote[$*$]{\thanktext}
}

\newcommand{\name}{Zi Yang Kang\footnote[$\dag$]{\affiliationi}\quad \textcircled{r} \quad Piotr Dworczak\footnote[$\ddag$]{\affiliationii}}

\newcommand{\affiliationi}{Department of Economics, University of Toronto; \href{mailto:zy.kang@utoronto.ca}{\tt zy.kang@utoronto.ca}.}
\newcommand{\affiliationii}{University of Cologne; Northwestern University; and Group for Research in Applied Economics (GRAPE); \href{mailto:dworczak.piotr@gmail.com}{\tt dworczak.piotr@gmail.com}.}

\newcommand{\thanktext}{We thank Simon Loertscher, Axel Ockenfels, Paula Onuchic, Mike Ostrovsky, Agathe Pernoud, Bruno Strulovici, Filip Tokarski, Shosh Vasserman, Frank Yang, and several seminar audiences for helpful discussions. We gratefully acknowledge the support received under the ERC Starting Grant IMD-101040122. This research was conducted in part at the Adenauer School of Government (ASG) at the University of Cologne with generous support from the Alfred Landecker Foundation.
}
\newcommand{\abstracttext}{
\noindent We study the design of priority pricing systems with heterogeneous agents in environments in which improving quality for some agents reduces the average quality that can be provided. Contrary to the equity--efficiency tradeoff emphasized in public debates, we show that under economically natural conditions priority pricing can Pareto-improve on an equal-allocation benchmark. Three priority tiers suffice for such an improvement, combining higher quality for a fee, lower quality with compensation, and an intermediate tier at the benchmark quality; two tiers are never enough. Our results provide a framework for overcoming equity--efficiency tensions in applications such as lane pricing, waiting-line design, public provision, and insurance.
\\[2ex]
{\em JEL classification}: D82, D47, D63\\[1ex]
{\em Keywords}: Pareto improvements, mechanism design, inequality-aware market design}
\newcommand{\inserttitle}{\begin{center}\hfill\\[0ex]
\LARGE\papertitle\\[3ex]
\large
\name\\[3ex]
\monthyeardate\today
\\[4ex]
\end{center}}	

\begin{document}
\onehalfspacing
\inserttitle
\begin{abstract}
\abstracttext
\end{abstract}
\setstretch{1.5}
\thispagestyle{empty}

\clearpage

\setcounter{page}{1}

\section{Introduction}
\label{sec:introduction}

Many scarce resources---such as space on highways, bandwidth in communication networks, and access to public services---are allocated through waiting rather than explicit prices.  Since at least \citet{pigou20}, economists have emphasized the inefficiency of such time-based allocation: waiting is socially costly, and many individuals would pay to avoid it, creating scope for gains from trade.  Yet pricing is often resisted on equity grounds because it is seen as replacing time-based rationing with income-based rationing.   

Public debates surrounding ``pay-to-skip-the-line'' systems vividly reflect this concern.  On roads, critics deride priority toll lanes as ``Lexus lanes'' that allow the rich to buy their way out of congestion while others bear the cost of greater delays \citep{anderson05}.  At airports, paid access to expedited security lines---such as CLEAR---has been portrayed as allowing some passengers to cut in line \citep{mull23} and as splitting ``travelers into haves and have-nots'' \citep{stewart22}.  Similar concerns arise when governments offer fast-track treatment for payment.  Critics of investor-citizenship programs, for example, describe them as allowing wealth to purchase accelerated access to a public status for which ordinary applicants must wait \citep{shachar18}.

Our main insight in this paper is that this equity--efficiency tradeoff need not be as stark as these debates suggest. We define the \textit{equity benchmark} as an allocation in which all individuals receive the same quality (e.g., face the same wait time). We assume that departures from this benchmark are costly: giving some individuals higher quality requires both giving others lower quality and lowering quality on average. Even so, prices and priority can be introduced in a way that \emph{Pareto-improves} on the equity benchmark, but only by moving beyond the two-tier systems that dominate practice. With only two tiers---a paid ``fast lane'' and a free regular option---no Pareto improvement is possible: even if the revenue from selling priority is redistributed, some individuals must be worse off than under the equity benchmark. With three tiers, by contrast, every individual can be made strictly better off. Our construction combines a paid high-priority tier, a compensated low-priority tier, and an intermediate tier that preserves the benchmark allocation.  Together, these results help explain why simple priority systems generate backlash: without a third tier, unequal priority necessarily harms some individuals.  At the same time, our framework identifies designs under which everyone benefits from the introduction of prices and priority.

Our formal model is stylized but flexible enough to cover a wide range of applications.  We consider a designer who allocates potentially different qualities of a good to a unit mass of agents with different marginal rates of substitution between money and quality.  In lane pricing, for example, quality decreases with travel time, and agents vary in their distaste for waiting.  Under the equity benchmark, every agent receives the same quality and makes no monetary transfer.  The designer seeks to replace this benchmark with a pricing mechanism---a menu of qualities and monetary transfers---that introduces dispersion in quality.  Application-specific feasibility constraints determine which quality distributions can be implemented.  Rather than model these constraints directly, we impose three assumptions on the feasible set of quality distributions.

First, we assume there is a tradeoff---what we call \emph{friction}---between dispersion and average quality.  Specifically, the degenerate quality distribution associated with the equity benchmark uniquely maximizes average quality.  This assumption captures the idea that reallocating quality to create dispersion is costly.  For example, if travel time is convex in lane traffic, unequal lane loads create dispersion in travel times while raising average travel time. In other applications, friction may arise from curvature in preferences, or from heterogeneity in the costs of serving different types, as under adverse selection.

Second, we assume that this mean--dispersion tradeoff is locally mild.  More precisely, the loss in average quality from creating dispersion can be made arbitrarily small relative to the aggregate quality gain received by agents allocated qualities above the equity benchmark.  Small departures from equal lane loads, for instance, generate first-order dispersion but only a higher-order increase in average travel time.

Third, we assume that the designer can randomize assignments.  Formally, if a quality distribution is feasible, then every mean-preserving contraction of that distribution is also feasible.\footnote{In our applications, we also establish when the Pareto improvement can be implemented without randomizing assignments.}

Under these assumptions, we prove two main results. On the one hand, no mechanism with only two tiers---that is, two quality levels---can Pareto-improve on the equity benchmark.  On the other hand, there exists a three-tier mechanism that strictly Pareto-improves on the equity benchmark. Thus, three tiers are the minimum required to make everyone better off---a modest but practically feasible level of complexity.

For intuition, note that priority pricing creates \textit{sorting gains} by reallocating quality toward agents with higher willingness to pay, but the resulting dispersion lowers average quality.  A Pareto improvement thus requires extracting enough of these gains through prices and redistributing the revenue to compensate every agent for the loss in average quality.

With only two tiers, revenue from the high tier finances a subsidy to the low tier.  The threshold type is indifferent between the two tiers, so the price premium for the higher tier must equal this type's willingness to pay for the quality upgrade.  Budget balance then implies that the subsidy exactly compensates the threshold type for receiving the lower quality rather than the mechanism's \emph{average} quality.  Consequently, the threshold type obtains precisely the value of average quality.  Since average quality is lower than under the equity benchmark, the threshold type is strictly worse off.

Adding a third tier relaxes this tension by creating two distinct pricing margins.  The two quality upgrades are priced at the willingness to pay of distinct threshold types, so quality improvements at the top are valued at a higher rate than the quality reductions at the bottom. This creates additional sorting gains that can be extracted and redistributed.  As long as the threshold types remain bounded away from each other, these gains are first order in the size of the quality dispersion, while the corresponding loss in average quality is of higher order.  The gains can therefore compensate every agent for the loss in average quality. 

Our applications illustrate the breadth and flexibility of the framework. Our assumptions, and hence the main results, apply across environments with different allocation technologies and sources of friction. Beyond our running example of lane pricing, we study the design of wait lines---for instance, for airport screening or access to public services---where our results explain how a design with three differentiated lines can avoid the equity--efficiency tradeoff, whereas a two-tier system cannot. In the problem of allocating goods with heterogeneous quality, such as public housing, we show that offering at least three types of units---for example, differentiated by size---can generate a Pareto improvement over uniform provision. Finally, we consider insurance design under adverse selection and show how our framework and main result extend to environments in which the resource cost of quality depends on the recipient.

The remainder of this paper is organized as follows. We conclude this section by reviewing the related literature. \Cref{sec:framework} introduces the formal model, and \Cref{sec:main} presents the main results.  We apply and generalize these results in \Cref{sec:applications}, and \Cref{sec:conclusion} concludes.

\subsection{Related Literature}

The pricing of priority in access to goods and services---and its distributional consequences---has received sustained attention in economics. As we discuss below, several papers show that priority pricing can Pareto-improve on equal access or laissez-faire in particular environments.

Our contribution to this literature is to propose a general framework for priority pricing and to characterize the minimum number of tiers required for a Pareto improvement.  We model the supply side through the set of feasible quality distributions, which isolates two channels through which priority pricing affects welfare: \textit{(i)} changes in average quality and \textit{(ii)} sorting gains from allocating quality to agents with higher willingness to pay.  We focus on environments with a mean--dispersion tradeoff, in which creating the dispersion needed for sorting \textit{lowers} average quality, so any Pareto improvement must be driven by sorting gains.  To the best of our knowledge, our paper is the first to show that three, rather than two, priority tiers are necessary to achieve a Pareto improvement in such environments.

\citet{chaowilson87} and \citet{gershkovschweinzer10} show that selling priority and redistributing the resulting revenue can generate Pareto improvements.  \citeauthor{gershkovschweinzer10} study a model in which agents differ in marginal utility of time and hold property rights to potentially random positions in a queue.  They characterize when these positions can be efficiently reordered through an incentive-compatible, budget-balanced mechanism while leaving every agent weakly better off in expectation.  Most relevant for us, they show that a random queue order---which corresponds to our equity benchmark---can be Pareto-improved upon.\footnote{\citet{gershkovwinter23} make the complementary point that priority service can uniformly harm consumers when access is controlled by a profit-maximizing monopolist.}  With total waiting time fixed and linear disutility from waiting, their queue-reordering model permits sorting gains without reducing average quality. Thus, it does not feature the mean--dispersion tradeoff that is at the heart of our results.

A sizable literature studies the distributional consequences of congestion pricing in the bottleneck model of \citet{vickrey69}, including \citet{arnottetal94}, \citet{vandenbergverhoef11}, \citet{hall18}, and \citet{bobbioetal21}.  These models likewise do not generally feature the mean--dispersion tradeoff captured by our friction assumption.  In fact, because agents can adjust their departure times in response to prices in the \citeauthor{vickrey69} framework, congestion pricing can reduce average congestion---or, in the language of our paper, \emph{increase} average quality.  This channel is often central to obtaining a Pareto improvement and explains why some of these papers obtain Pareto improvements even before redistributing revenue.  \citet{hall18}, for example, shows that preventing the fall in road throughput associated with hypercongestion can yield a Pareto improvement even with identical agents and before revenue is redistributed. Our friction assumption, by contrast, requires dispersion to reduce average quality relative to the equity benchmark. Pareto improvements must therefore be driven by sorting gains, with pricing revenue redistributed so as to compensate every agent for the resulting loss in average quality.\footnote{Because our positive result predicts a \textit{strict} Pareto improvement, it remains valid when transfers are imperfect and a sufficiently small share of revenue is dissipated.}

More broadly, our paper contributes to a growing market-design literature on congestion pricing, including work by \citet{ostrovskyschwarz18}, \citet{cramtonetal19}, \citet{beheshtianetal20}, and \citet{ostrovskyyang24}.\footnote{An active empirical literature also quantifies the welfare effects of congestion pricing; see, for example,  \citet{hall21}, \citet{kreindler24}, \citet{cookli25}, \citet{cooketal26}, \citet{ateretal26}.}

Finally, our paper relates to the recent literature on \emph{inequality-aware market design}.  Building on classical work by \citet{weitzman77}, \citet{spence77}, and \citet{nicholszeckhauser82}, this literature studies mechanism design under redistributive objectives; contributions include work by \citet{condorelli13}, \citet{dworczaketal21}, \citet{kang23}, and \citet{akbarpouretal24}, among many others.  The paper most closely related to ours within this literature is \citet{kangwatt26}.  They study public provision when agents also have access to a private market.  The private option imposes a constraint mathematically similar to Pareto improvement.  While this literature characterizes optimal mechanisms for a specified social welfare function, we instead identify conditions under which a Pareto improvement exists without specifying interpersonal welfare comparisons.

\newpage
\section{Framework}
\label{sec:framework}

We develop an abstract framework for the design of Pareto-improving pricing systems.  The framework is intentionally parsimonious: it is broad enough to encompass a wide range of applications, yet sufficiently structured to isolate the key economic forces that drive our results.

\subsection{Setup}

There is a unit mass of agents, each of whom demands one unit of a good.  Units of the good may differ in their physical quality $q\in\R_+$.  Each agent privately observes his type $r$.  The distribution of types in the population is $G$, which has positive continuous density $g$ on a compact interval $[\und r,\BAR r]\subset\R_+$.  An agent of type $r$ who is assigned a unit of physical quality  $q$ and makes a payment $p\in\R$ obtains utility
\[r v(q)-p,\]
where a negative payment is interpreted as a rebate to the agent.  The function $v:\R_+\to [0,1]$ is assumed to be continuously differentiable, increasing,\footnote{Throughout, we use ``increasing'' and ``decreasing'' to mean strictly increasing and strictly decreasing, respectively; weak monotonicity is denoted by ``nondecreasing'' and ``nonincreasing.''} and concave, with $v(0)=0$. 

We refer to $Q:=v(q)$ as the good's \emph{effective quality} (or simply \emph{quality}, when there is no ambiguity).  Agent utility is $rQ-p$, so the agent's type is his marginal rate of substitution between effective quality and money.  The distinction between physical and effective quality separates the objective characteristics of the good from their utility consequences for agents.  In particular, agents are risk-neutral over lotteries in effective quality, but may be risk-averse over lotteries in physical quality (when $v$ is strictly concave).

We use the following running example throughout the paper to illustrate the abstract framework. 

\begin{example*}
Agents are commuters traveling from a common origin to a common destination.  A trip entails time $t\in[\und t,\BAR t]$ spent in traffic, where $0\leq\und t<\BAR t<1$.  Normalizing the physical quality of a trip with no time spent in traffic to $1$, define the physical quality of a trip with travel time $t$ as $q=1-t$.  Thus, shorter travel time corresponds to higher physical quality.  When $v$ is strictly concave, agents are risk-averse over travel time: they strictly prefer a certain travel time to any nondegenerate lottery with the same mean.
\end{example*}

\subsection{Outcomes}

A designer assigns each agent a possibly degenerate lottery over physical qualities.  Because utility is linear in effective quality $Q$, an agent's payoff depends on his assigned lottery only through its expected quality.  We can therefore summarize the physical allocation by a distribution $F\in\Delta([0,1])$ of expected qualities available for assignment, where $1-F(Q)$ is the mass of agents who can be assigned expected quality strictly greater than $Q$.\footnote{This representation remains without loss of generality when goods are scarce: an agent who receives no good can be treated as receiving physical quality zero.}

Different applications restrict the feasible distributions of expected qualities in different ways.  We represent these restrictions by a set $\calF\subseteq\Delta([0,1])$ from which the designer must select $F$.\footnote{For simplicity, we abstract from variable production costs.  \Cref{sec:insurance} shows how our results accommodate such costs.}  The set $\calF$ is the central object of our framework; \Cref{sec:feasible} introduces assumptions on $\calF$ that generate the tradeoff between dispersion and average quality underlying our results. 

Each feasible distribution $F\in\calF$ uniquely determines, up to a $G$-null set, a budget-balanced and incentive-compatible deterministic reduced-form mechanism in expected qualities and payments.\footnote{Here, ``deterministic'' refers to the reduced-form pair $(Q,p)$ assigned to each type and does not rule out randomization over physical qualities, which can be incorporated into $\calF$.  This restriction is without loss because utility is linear in expected quality and payments.}  Incentive compatibility requires expected quality to be nondecreasing in type \citep{myerson81}.  Hence, the allocation rule is
\[Q(r)=F^{-1}(G(r)),\]
the unique nondecreasing allocation rule whose distribution is $F$, up to $G$-null sets.  Given this allocation rule, the envelope formula determines payments up to an additive constant \citep{milgromsegal02}:
\[p(r)=rQ(r) - U - \int_{\und r}^r Q(s)\,\dd s,\]
where $U$ is uniquely chosen to satisfy budget balance:
$\int_{\und r}^{\BAR r}p(r)\,\dd G(r)=0.$
We say that the resulting mechanism $\(Q(\cdot),p(\cdot)\)$ \emph{implements} $F$.

\subsection{Pareto Improvements}

We define the \emph{equity benchmark} as the degenerate quality distribution $\d_{Q_0}$, where
\[Q_0\coloneq\sup_{F\in\calF}\E_F[Q].\]
We assume that $\d_{Q_0}\in \calF$.\footnote{This assumption is without loss of generality if $\calF$ is compact and the designer can randomize.}
Under the equity benchmark, average quality is maximized while every agent receives expected quality $Q_0$ and pays zero; the resulting direct mechanism is budget-balanced and incentive-compatible. As we show in \Cref{sec:applications}, this benchmark often coincides with the laissez-faire non-price allocation---that is, the equilibrium allocation in a system without monetary transfers.

The designer seeks to construct a Pareto improvement over the equity benchmark.  For any distribution $F\in\calF$, we say that $F$ is a \emph{Pareto improvement over the equity benchmark} (or simply a \emph{Pareto improvement}) if its implementing mechanism $\(Q(\cdot),p(\cdot)\)$ satisfies
\begin{equation}\label{eq:PI}
rQ(r)-p(r)\geq rQ_0,
\qquad\text{for every }r\in[\und r,\BAR r],\tag{PI}
\end{equation}
with strict inequality for a positive mass of agents.  We say that $F$ is a \emph{strict Pareto improvement} if every agent strictly prefers its implementing mechanism to the equity benchmark---that is, if the inequality in \eqref{eq:PI} is strict for every type $r\in[\und r,\BAR r]$.

Our focus on Pareto improvements reflects an intentionally conservative welfare criterion.  We make no interpersonal comparisons of utility and posit no social welfare function.  Unlike much of the existing literature, which studies optimal allocations under explicit welfare objectives, we ask only whether prices can realize Pareto gains, subject to budget balance and incentive compatibility, without invoking distributional judgments.

\subsection{Feasible Policies}\label{sec:feasible}

We complete the description of our framework by imposing three economically motivated restrictions on the set of feasible expected-quality distributions $\calF\subseteq\Delta([0,1])$. 

We first motivate these restrictions using our running example.

\begin{example*}
There are $N\geq 2$ lanes on a highway.  If a mass $m_i$ of agents is assigned to lane $i$, each agent in that lane experiences travel time $w(m_i)$, where $w:[0,1]\to[\und t,\BAR t]$ is continuously differentiable, increasing, and convex.  These assumptions capture congestion effects: travel time increases with lane usage at a nondecreasing rate.

Given a vector of lane loads $\bm=(m_1,\ldots,m_N)\in\Delta^{N-1}$, the distribution of travel times is $\sum_{i=1}^N m_i\d_{w(m_i)}$.  Because travel time $t$ corresponds to effective quality $v(1-t)$, the induced distribution of effective qualities is
\[F^{\bm} \coloneq \sum_{i=1}^N m_i\d_{v(1-w(m_i))}.\]

Starting from any lane-load vector $\bm$, the designer can randomize agents' assignments across lanes while preserving the marginal distribution of realized physical qualities.  The distributions of expected qualities that can be generated  this way are precisely the mean-preserving contractions of $F^{\bm}$, which we denote by $\MPC(F^{\bm})$.\footnote{Strassen's theorem \citep[Theorem 3.4.2]{mullerstoyan02} implies that there exist random variables $X$ and $\bar X$ such that $X\sim F$, $\bar X\sim\bar F$, and $\E[X\mid\bar X]=\bar X$ if and only if $\bar F$ is a mean-preserving contraction of $F$.  Here, $X$ is realized effective quality and $\bar X$ is expected effective quality.  For any $F\in\Delta([0,1])$, define
\[\MPC(F)\coloneq\left\{\bar F\in\Delta([0,1]):\int_0^x\bar F^{-1}(u)\,\dd u\geq\int_0^x F^{-1}(u)\,\dd u\text{ for every }x\in[0,1],\text{ with equality at }x=1\right\}.\]}
Consequently, the feasible set of expected quality distributions is
\[\calF = \left\{F\in\Delta([0,1]):F\in\MPC(F^{\bm})\text{ for some }\bm\in\Delta^{N-1}\right\}.\]
\end{example*}

Rather than committing to a particular technology, we impose three axioms on the feasible set $\calF$.

\begin{axiom}[Randomization]\label{ax:randomization}
The feasible set $\calF$ is closed under mean-preserving contractions: if $F\in\calF$, then $\MPC(F)\subseteq\calF$.
\end{axiom}

\Cref{ax:randomization} formalizes the designer's ability to randomize assignments.  Starting from any feasible distribution, the designer can reduce the dispersion in agents' expected qualities without changing their average.  The axiom holds by construction in the running example.

\begin{axiom}[Friction]\label{ax:friction}
The equity benchmark $\d_{Q_0}$ uniquely maximizes average quality: for every $F\in \calF$,  if $F\ne\d_{Q_0}$, then  $\E_F[Q]<Q_0.$  
\end{axiom}

\Cref{ax:friction} captures a friction in the allocation technology: any feasible departure from the equity benchmark lowers average quality. The axiom therefore imposes a local tradeoff between dispersion and average quality around the benchmark, thereby making improvements over the equity benchmark more difficult to achieve.

The axiom holds in the running example under the maintained assumptions.  To see this, define the function $\phi(m)\coloneq mv(1-w(m))$, and note that the average quality generated by a lane-load vector $\bm$ is $\sum_{i=1}^N\phi(m_i)$.
Since $w$ is increasing and convex and $v$ is increasing and concave, $\phi$ is strictly concave.  By Jensen's inequality, for any $\bm\in \Delta^{N-1}$,
\[\sum_{i=1}^N\phi(m_i)\leq N\phi\!\(\frac1N\)=v\!\(1-w\!\(\frac1N\)\),\]
with equality if and only if $m_i=1/N$ for every $i\in\{1,\ldots,N\}$.  Hence, $Q_0=v(1-w(1/N))$ satisfies the requirements of the axiom, and the equal assignment of agents to lanes induces the unique feasible distribution with average quality $Q_0$. Intuitively, because travel times increase with lane usage at a nondecreasing rate and agents are weakly risk-averse over travel time, equalizing lane usage uniquely maximizes average effective quality.\footnote{Note that neither strict risk aversion nor strictly convex congestion effects are required for \Cref{ax:friction} to hold.}

\begin{axiom}[Smoothness]\label{ax:smoothness}
The tradeoff between dispersion and average quality around the equity benchmark $\d_{Q_0}$ is locally mild: for every $\e>0$, there exists $F\in \calF$ such that 
\begin{equation}\label{eq:smoothness}
\E_F[\(Q-Q_0\)_+]>0\AND \frac{Q_0-\E_{F}[Q]}{\E_F[\(Q-Q_0\)_+]} \leq \e.\tag{S}
\end{equation}
\end{axiom}

\Cref{ax:smoothness} provides a counterpoint to \Cref{ax:friction}.  Whereas friction requires every feasible departure from the equity benchmark to lower average quality, smoothness requires that this loss, $Q_0-\E_{F}[Q]$, can be made arbitrarily small relative to the cumulative improvement $\E_F[\(Q-Q_0\)_+]$ above $Q_0$. If the uniqueness requirement in \Cref{ax:friction} were violated, so that a nondegenerate feasible distribution had average quality $Q_0$, condition~\eqref{eq:smoothness} would hold trivially. 

\Cref{ax:smoothness} also holds in the running example.  Starting from equal lane loads, transfer a mass $\eta$ of agents from one lane to another.  The first-order effects on average quality cancel because $w$ and $v$ are continuously differentiable.  The resulting loss in average quality is therefore $\calo(\eta)$, whereas the cumulative improvement over $Q_0$ is of order~$\eta$ because effective quality is locally strictly decreasing in lane usage.  Hence, the ratio in condition~\eqref{eq:smoothness} converges to zero as $\eta\downarrow0$.  \Cref{sec:lane} and \Cref{app:lanes} provide a formal proof.

\section{Main Results}
\label{sec:main}

Our main results characterize the minimum number of tiers required to construct a Pareto improvement. To formalize simplicity in a policy-relevant way, we count the number of distinct expected-quality levels offered by the mechanism.

\begin{definition}\label{def:tiers}
A distribution $F\in\calF$ offers $n$ tiers if $\abs{\supp(F)}=n$.
\end{definition}

Any Pareto improvement must offer at least two tiers: a one-tier mechanism assigns the same quality $Q\leq Q_0$ to every agent, while incentive compatibility and budget balance require payments to be zero. 

\subsection{Statement and Discussion of Main Results}

Our first result shows that two tiers do not suffice.

\begin{theorem}\label{thm:two}
Under \Cref{ax:friction}, there does not exist a Pareto improvement with two tiers.
\end{theorem}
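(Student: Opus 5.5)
Suppose $F\in\calF$ offers two tiers, with support $\{Q_L,Q_H\}$, $Q_L<Q_H$, and weights $1-\mu$ and $\mu$, where $\mu\in(0,1)$. The plan is to show that the threshold type is strictly worse off than under the equity benchmark, so \eqref{eq:PI} fails. Throughout write $\bar Q:=\E_F[Q]=(1-\mu)Q_L+\mu Q_H$. Since $F\neq\d_{Q_0}$, \Cref{ax:friction} gives $\bar Q<Q_0$.

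\textbf{Step 1: describe the implementing mechanism.} Because $g>0$ on $[\und r,\BAR r]$, there is a unique threshold $r^*\in(\und r,\BAR r)$ with $G(r^*)=1-\mu$. The allocation rule $Q(r)=F^{-1}(G(r))$ equals $Q_L$ for $r<r^*$ and $Q_H$ for $r>r^*$. From the envelope formula, payments are constant on each tier: $p(r)=p_L:=\und rQ_L-U-\und rQ_L=-U$ on the low tier, and on the high tier $p_H=p_L+r^*(Q_H-Q_L)$. To verify the latter, compute $rQ_H-\int_{\und r}^rQ(s)\,\dd s=rQ_H-(r^*-\und r)Q_L-(r-r^*)Q_H$, which yields the stated premium. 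So the price premium equals the threshold type's willingness to pay for the upgrade, and the threshold type gets the same utility $U=r^*Q_L-p_L$ from either tier.

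\textbf{Step 2: pin down the threshold type's utility from budget balance.} Budget balance requires $(1-\mu)p_L+\mu p_H=0$. Substituting $p_H=p_L+r^*(Q_H-Q_L)$ gives $p_L=-\mu r^*(Q_H-Q_L)$. Hence
\[
U=r^*Q_L+\mu r^*(Q_H-Q_L)=r^*\bar Q .
\]
That is, the threshold type obtains exactly the value of the average quality.

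\textbf{Step 3: conclude.} Since $r^*>\und r\geq 0$ and $\bar Q<Q_0$, we get $r^*Q(r^*)-p(r^*)=r^*\bar Q<r^*Q_0$, which violates \eqref{eq:PI} at $r=r^*$. This holds whichever tier is assigned at the null point $r^*$, since that type is indifferent. Moreover, utility $rQ(r)-p(r)$ is continuous in $r$ while $rQ_0$ is continuous too, so the violation in fact extends to a positive mass of types near $r^*$. This means the failure does not hinge on the null-set ambiguity in the mechanism.

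The only delicate point is the strict positivity $r^*>0$. It follows because $r^*$ lies strictly inside $[\und r,\BAR r]\subset\R_+$, using $\mu\in(0,1)$ and the full support of $g$. Beyond that, the argument is a direct computation with no real obstacle.
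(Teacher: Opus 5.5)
Your proof is correct and is essentially the paper's argument. The paper gets the threshold type's utility $r_0\E_F[Q]$ by evaluating the general decomposition \eqref{eq:decomposition} at the single jump, where both sorting terms vanish. You compute it directly from the premium $p_H-p_L=r^*(Q_H-Q_L)$ and budget balance, which is exactly the intuition the paper gives in its introduction, and then conclude via \Cref{ax:friction}, $r^*>0$, and continuity as the paper does.

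One harmless slip: the envelope formula gives $p_L=\und r\,Q_L-U$, not $-U$, and the threshold type's utility is $U+(r^*-\und r)Q_L$, so you are using $U$ for two different quantities. Since Step~2 recovers $p_L$ from budget balance alone, nothing downstream is affected.
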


Our second result shows that three tiers suffice.

\begin{theorem}\label{thm:three}
Under \Cref{ax:randomization,ax:smoothness}, there exists a strict Pareto improvement with three tiers.
\end{theorem}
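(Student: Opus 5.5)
The plan is to take a distribution from \Cref{ax:smoothness} with a small ratio $\e$, and use \Cref{ax:randomization} to contract it to a three-point distribution $H$. The atoms of $H$ will be $Q_L<Q_0<Q_H$, with the middle tier exactly at the benchmark quality $Q_0$. The key steps are a closed-form expression for the worst-off agent's utility gain and a construction that keeps the two threshold types a uniformly bounded distance apart.

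\emph{Step 1 (welfare accounting for three tiers with middle tier at $Q_0$).} Let $H$ put masses $\alpha,\gamma,\beta$ on $Q_H,Q_0,Q_L$. Let $r_1<r_2$ be the threshold types, so $G(r_1)=\beta$ and $G(r_2)=1-\alpha$, and set $\delta\coloneq Q_0-\E_H[Q]\ge 0$ and $A\coloneq\alpha(Q_H-Q_0)$. The mean identity gives $\beta(Q_0-Q_L)=A+\delta$. Define $D(r)\coloneq rQ(r)-p(r)-rQ_0$. By the envelope formula, $D'(r)=Q(r)-Q_0$, so $D$ is constant, equal to some $D_M$, on the middle tier, and $D\ge D_M$ elsewhere. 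Budget balance gives $\int D\,\dd G=\E[r(Q(r)-Q_0)]$. Subtracting the explicit deviations of $D$ from $D_M$ on the top tier, $(Q_H-Q_0)(r-r_2)$, and on the bottom tier, $(Q_0-Q_L)(r_1-r)$, the terms in conditional means of $r$ cancel. This leaves
\[D_M=A(r_2-r_1)-\delta r_1 .\]
Hence $H$ is a strict Pareto improvement as soon as $\delta/A<(r_2-r_1)/r_1$, which holds whenever $\delta/A<(r_2-r_1)/\BAR r$. (If $r_1=0$ the condition is just $A>0$.)

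\emph{Step 2 (construction).} Fix $\e>0$ and take $F\in\calF$ as in \eqref{eq:smoothness}. Write $a\coloneq\E_F[(Q-Q_0)_+]>0$, let $\mu\coloneq 1-F(Q_0)>0$ be the mass strictly above $Q_0$, and set $\delta\coloneq Q_0-\E_F[Q]\le\e a$.
\begin{itemize}
\item First pool the mass above $Q_0$ into one atom at $Q_0+a/\mu$. Pool the rest into one atom at $Q_L=Q_0-(a+\delta)/(1-\mu)$. If $\mu=1$, no mass lies below $Q_0$ and pooling gives mean $Q_0+a>Q_0$, a contradiction; so $\mu<1$.
\item Then take mass $\mu/2$ from the top atom and mass $y$ from the bottom atom, with $(\mu/2)(a/\mu)=y(a+\delta)/(1-\mu)$, and pool them into an atom at exactly $Q_0$.
\end{itemize}
Each operation is a mean-preserving contraction, so the result $H$ lies in $\calF$ by \Cref{ax:randomization}. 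It has three tiers, $A=a/2$, and an unchanged $\delta$, so $\delta/A\le 2\e$.

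\emph{Step 3 (thresholds stay apart; main obstacle).} The delicate point is that the masses of $H$ depend on $F$, and as $\e\to0$ the mass $\mu$ might vanish. The threshold gap $r_2-r_1$ must nevertheless be bounded below independently of $\e$. The middle mass of $H$ is
\[\gamma=\frac{\mu}{2}+y,\qquad y=\frac{(1-\mu)a}{2(a+\delta)}\ge\frac{1-\mu}{2(1+\e)}.\]
Hence $\gamma\ge\min\{\tfrac{\mu}{2},\tfrac{1-\mu}{2(1+\e)}\}\cdot\ldots$; more simply, $\gamma\ge\tfrac{\mu}{2}+\tfrac{1-\mu}{4}\ge\tfrac14$ for $\e\le1$. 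Since $g$ is positive and continuous on the compact interval $[\und r,\BAR r]$, it is bounded above by some $\BAR g$. Therefore $r_2-r_1\ge\gamma/\BAR g\ge 1/(4\BAR g)\eqqcolon c>0$. Choosing $\e<\min\{1,\,c/(2\BAR r)\}$ gives $\delta/A\le 2\e<c/\BAR r\le(r_2-r_1)/r_1$. By Step 1, $D(r)\ge D_M>0$ for every $r$, so $H$ is a strict Pareto improvement with three tiers.

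I expect the main care to be needed in the Step 1 algebra, where the conditional-mean terms must cancel cleanly, and in the degenerate cases $\mu=1$ and $\delta=0$ in Step 2. The case $\delta=0$ is harmless: it only makes $D_M=A(r_2-r_1)>0$.
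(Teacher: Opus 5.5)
Your proof is correct, and its skeleton matches the paper's. You take $F$ from \Cref{ax:smoothness}, contract it via \Cref{ax:randomization} to three tiers with the middle tier exactly at $Q_0$, and show the middle tier is subsidized. Your identity $D_M=A(r_2-r_1)-\delta r_1$ is the paper's equation for $-p_0$. You derive it directly from budget balance rather than from the sorting decomposition, and the algebra checks out.

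The difference lies in how the contraction is built and how the thresholds are kept apart. The paper fixes all three tier masses at $1/3$, so the thresholds $G^{-1}(1/3)$ and $G^{-1}(2/3)$ are pinned down before $\e$ is chosen. It then sets $Q_L=Q_0-\E_F[(Q_0-Q)_+]$ and $Q_H=Q_0+\E_F[(Q-Q_0)_+]-2\delta$, and must check the mean-preserving-contraction property through stop-loss inequalities.

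Your two-stage pooling makes feasibility immediate, since each step is a conditional expectation, and it keeps the sorting term at $A=a/2$ with no $\delta$ deduction. The cost is that your tier masses depend on $F$. You therefore need the extra step of bounding the middle mass below by $1/4$ and using $\sup g<\infty$ (from continuity on a compact interval) to keep $r_2-r_1$ uniformly bounded away from zero. The paper's version buys tier masses that are independent of $F$ and of $G$, which it remarks on after the proof. Yours avoids the convex-order verification at the price of that density bound.

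One cosmetic point: delete the abandoned line $\gamma\ge\min\{\cdot,\cdot\}\cdot\ldots$ in Step 3. The bound $\gamma\ge\mu/2+(1-\mu)/4\ge 1/4$ that follows it is all you need.
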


Together, \Cref{thm:two,thm:three} establish that, under our three axioms, three is the minimum number of tiers required for a Pareto improvement.

These results are particularly relevant because two-tier priority systems are widespread in practice. Examples include express lanes on roads and fast-track options in airport security, visa processing, and health care. Such systems are appealing for their simplicity and are often defended as a way to generate allocative gains by allowing agents who value priority more highly to pay for it.

In this light, \Cref{thm:two} is a negative result: any nontrivial two-tier priority system must make some agents strictly worse off than under the equity benchmark.  Relative to giving every agent the mechanism's average quality at no charge, sorting by willingness to pay benefits everyone except the agent indifferent between the two tiers.  This agent receives no sorting gain to offset the loss in average quality and is therefore worse off.  Even without \Cref{ax:friction}, a two-tier system cannot make every agent \textit{strictly} better off.

The negative result of \Cref{thm:two} therefore helps explain why many two-tier priority systems have generated public backlash.  When congestion or other frictions create a tradeoff between dispersion and average quality, allowing some users to purchase priority necessarily harms others, even if the resulting revenue is redistributed.  Moreover, if type measures the marginal value of reduced waiting time relative to the marginal value of money (e.g., \citealp{dworczaketal21}), the intermediate types harmed by a two-tier system may include low- and middle-income users who place a high value on time but also have a high marginal value of money.

\Cref{thm:three} provides the positive counterpart: three tiers are sufficient to make every agent \textit{strictly} better off.  We construct a mechanism with an intermediate tier that provides expected quality exactly equal to $Q_0$.  This tier separates the threshold between the low and intermediate tiers from the threshold between the intermediate and high tiers.  The quality jump at each threshold generates sorting gains for the agent indifferent at the other, so that together they generate sorting gains for every agent.  \Cref{ax:randomization,ax:smoothness} allow us to construct such a mechanism with gains that outweigh the loss in average quality.

This positive result shows that achieving a Pareto improvement does not require a complex or finely targeted pricing system: under our three axioms, three appropriately designed priority tiers suffice. Policy debates framed as a binary choice between uniform service and a two-tier priority system are therefore unnecessarily restrictive.

\begin{example*}
Because our running example satisfies \Cref{ax:randomization,ax:friction,ax:smoothness},  Theorems~\ref{thm:two} and~\ref{thm:three} apply. A conventional two-tier lane-pricing system cannot Pareto-improve on equal access: creating one faster lane necessarily leaves some commuters strictly worse off, even if toll revenues are redistributed. By contrast, three tiers of expected quality can make every commuter strictly better off. \Cref{sec:lane} returns to this application and studies its implementation in more detail.
\end{example*}

\subsection{Proofs of Main Results}\label{sec:proofs}

We begin by decomposing each agent's utility gain from priority pricing into two components: the effect of the change in average quality and the gains from sorting.  For any feasible distribution $F$ and its implementing mechanism $(Q,p)$, incentive compatibility and budget balance imply that
\begin{align}
rQ(r)-p(r)-rQ_0 
&= \underbrace{r\left[\E_F[Q]-Q_0\right]}_{\text{effect of change in average quality}} \notag\\
&\qquad+ \underbrace{\int_{\und r}^{r}\(r-s\)G(s)\,\dd Q(s) + \int_r^{\BAR r}\(s-r\)\left[1-G(s)\right]\,\dd Q(s)}_{\text{gains from sorting}}.\label{eq:decomposition}
\end{align}
Here, $\dd Q$ denotes the Stieltjes measure induced by the nondecreasing allocation rule $Q$.

The decomposition shows that each interior quality jump generates sorting gains for every type except the threshold type itself.  The two integrals in \cref{eq:decomposition} measure these gains relative to giving all agents the mechanism's average quality at no charge.  Jumps at thresholds below $r$ contribute positively to the first integral, and jumps at thresholds above $r$ contribute positively to the second.

With only two tiers, the single quality jump leaves its threshold type with no sorting gain.  In \Cref{fig:two}, this is the type $r_0$ at which the two linear segments of the utility schedule meet.  By \cref{eq:decomposition}, his utility is therefore $r_0\E_F[Q]$.  \Cref{ax:friction} implies that $\E_F[Q]<Q_0$, so this point lies strictly below the utility schedule under the equity benchmark.

\begin{figure}[t!]
\centering
\begin{tikzpicture}[scale=1]

\draw[line width=1.pt,densely dashed,color=black!50]
    (6,.15) -- (6,3);

\draw[scale=1,domain=1:10.5,densely dashed,variable=\x,
      line width=2.pt,color=black!50]
    plot ({\x},{.5*\x});
\draw (2.2,1.1) node[below right]
    {$\color{black!50}rQ_0$};

\draw[-{latex[scale=1.2]},line width=1.5pt]
    (0,0) -- (11.5,0) node[below] {$r$};
\draw[-{latex[scale=1.2]},line width=1.5pt]
    (0,-.5) -- (0,6.8) node[above] {utility};

\draw[line width=5pt,magenta]
    (1,0) -- (3.5,0)
    node[below] {\small $\color{magenta}Q_L$\vphantom{b}}
    -- (6,0);
\draw[line width=5pt,darkspringgreen]
    (6,0) -- (8.25,0)
    node[below] {\small $\color{darkspringgreen}Q_H$\vphantom{b}}
    -- (10.5,0);

\draw[scale=1,domain=1:6,variable=\x,
      line width=2.pt,color=ceruleanblue]
    plot ({\x},{1.5+.2*\x});
\draw[scale=1,domain=6:10.5,variable=\x,
      line width=2.pt,color=ceruleanblue]
    plot ({\x},{-1.5+.7*\x});

\draw (2.4,2.2) node[above]
    {\small $\color{ceruleanblue}rQ_L-p_L$};
\draw (8.4,5.0) node[above]
    {\small $\color{ceruleanblue}rQ_H-p_H$};

\draw[line width=1.5pt] (1,.15) -- (1,-.15)
    node[below] {$\und r$};
\draw[line width=1.5pt] (6,.15) -- (6,-.15)
    node[below] {$r_0$};
\draw[line width=1.5pt] (10.5,.15) -- (10.5,-.15)
    node[below] {$\BAR r$};

\end{tikzpicture}
\caption{Agent utility in a two-tier priority system.  The type $r_0$ is strictly worse off than under the equity benchmark.}
\label{fig:two}
\end{figure}

\paragraph{Proof of \texorpdfstring{\Cref{thm:two}}{Theorem 1}.}
We begin by proving \cref{eq:decomposition} for any feasible distribution $F$ and its implementing mechanism $(Q,p)$.  By the envelope theorem \citep{milgromsegal02}, incentive compatibility implies that, for every $r,s\in[\und r,\BAR r]$,
\[rQ(r) - p(r) = sQ(s) - p(s) + \int_{s}^r Q(t)\,\dd t.\]
Averaging over $s$ and using budget balance gives
\begin{align*}
rQ(r)-p(r)
&=\int_{\und r}^{\BAR r}sQ(s)\,\dd G(s)
+\int_{\und r}^{\BAR r}\left[\int_s^r Q(t)\,\dd t\right]\dd G(s)\\
&=\int_{\und r}^{\BAR r}sQ(s)\,\dd G(s) + \int_{\und r}^{r}G(s)Q(s)\,\dd s - \int_r^{\BAR r}[1-G(s)]Q(s)\,\dd s.
\end{align*}
Stieltjes integration by parts then yields the two sorting terms in \cref{eq:decomposition}.

Every feasible two-tier mechanism has an interior threshold $r_0>0$, so our argument above using \cref{eq:decomposition} applies.  By continuity of the utility schedule, a positive mass of agents near $r_0$ is strictly worse off than under the equity benchmark.\qed

Even without \Cref{ax:friction}, the threshold type cannot be strictly better off, since $\E_F[Q]\le Q_0$ by the definition of $Q_0$.

\begin{corollary}
There does not exist a strict Pareto improvement with two tiers.     
\end{corollary}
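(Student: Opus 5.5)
The corollary should follow directly from the decomposition \eqref{eq:decomposition}, which the proof of \Cref{thm:two} established for any feasible $F$ without using \Cref{ax:friction}. I would take an arbitrary feasible two-tier distribution $F$, with support $\{Q_L,Q_H\}$ and $Q_L<Q_H$, together with its implementing mechanism $(Q,p)$. The argument then locates the threshold type and shows that his utility cannot strictly exceed the benchmark.

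\textbf{Step 1: the threshold.} Since $Q(r)=F^{-1}(G(r))$ is nondecreasing and takes both values on sets of positive $G$-mass, there is a threshold $r_0\in(\und r,\BAR r)$. The rule satisfies $Q=Q_L$ below $r_0$ and $Q=Q_H$ above it. The Stieltjes measure $\dd Q$ is therefore a point mass of size $Q_H-Q_L$ at $r_0$. The value of $Q$ at $r_0$ itself is immaterial: it only affects a $G$-null set, and the utility schedule $rQ(r)-p(r)$ is continuous by the envelope formula.

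\textbf{Step 2: evaluate the decomposition at $r=r_0$.} Both sorting integrals vanish there. The first integral carries the weight $(r_0-s)$ and the second carries $(s-r_0)$, and each weight is zero at $s=r_0$, the only atom of $\dd Q$. Hence
\[r_0Q(r_0)-p(r_0)-r_0Q_0=r_0\left[\E_F[Q]-Q_0\right].\]

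\textbf{Step 3: sign of the right-hand side.} By the definition $Q_0=\sup_{F\in\calF}\E_F[Q]$, we have $\E_F[Q]\le Q_0$. Since $r_0\ge\und r\ge0$, the right-hand side is $\le 0$. So type $r_0$ is not strictly better off, and \eqref{eq:PI} cannot hold strictly for every type, which proves the corollary.

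\textbf{Main obstacle.} The only delicate point is the convention at the jump, namely how to interpret the Stieltjes integrals when the atom sits exactly at the evaluation point $r_0$. I would sidestep this in one of two ways. The first is to compute the utility of type $r_0$ directly from the envelope representation used in the proof of \Cref{thm:two}, which gives $r_0\int Q\,\dd G=r_0\E_F[Q]$ explicitly. The second is to take limits $r\uparrow r_0$ and $r\downarrow r_0$ in \eqref{eq:decomposition}, using continuity of the utility schedule; both limits yield $r_0\E_F[Q]$.
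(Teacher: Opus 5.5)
Your proposal is correct and follows the paper's own argument. The paper likewise evaluates \eqref{eq:decomposition} at the threshold type to get utility $r_0\E_F[Q]$, then uses only $\E_F[Q]\le Q_0$ from the definition of $Q_0$ as a supremum, without invoking the friction axiom. Your observation that the atom of $\dd Q$ at $r_0$ has zero weight in both sorting integrals already disposes of the endpoint-convention worry, so your fallback arguments are valid but unnecessary.
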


\begin{figure}[t!]
\centering
\begin{tikzpicture}[scale=1]

\draw[line width=1.pt,densely dashed,color=black!50]
    (4,.15) -- (4,2.4);
\draw[line width=1.pt,densely dashed,color=black!50]
    (7,.15) -- (7,3.6);

\draw[scale=1,domain=1:10.5,densely dashed,variable=\x,
      line width=2.pt,color=black!50]
    plot ({\x},{.4*\x});
\draw (2.2,.88) node[below right]
    {$\color{black!50}rQ_0$};

\draw[-{latex[scale=1.2]},line width=1.5pt]
    (0,0) -- (11.5,0) node[below] {$r$};
\draw[-{latex[scale=1.2]},line width=1.5pt]
    (0,-.5) -- (0,6.8) node[above] {utility};

\draw[line width=5pt,magenta]
    (1,0) -- (2.5,0)
    node[below] {\small $\color{magenta}Q_L$\vphantom{b}}
    -- (4,0);
\draw[line width=5pt,orange]
    (4,0) -- (5.5,0)
    node[below] {\small $\color{orange}Q_0$\vphantom{b}}
    -- (7,0);
\draw[line width=5pt,darkspringgreen]
    (7,0) -- (8.75,0)
    node[below] {\small $\color{darkspringgreen}Q_H$\vphantom{b}}
    -- (10.5,0);

\draw[scale=1,domain=1:4,variable=\x,
      line width=2.pt,color=ceruleanblue]
    plot ({\x},{1.6+.2*\x});
\draw[scale=1,domain=4:7,variable=\x,
      line width=2.pt,color=ceruleanblue]
    plot ({\x},{.8+.4*\x});
\draw[scale=1,domain=7:10.5,variable=\x,
      line width=2.pt,color=ceruleanblue]
    plot ({\x},{-1.3+.7*\x});
\draw (10.7,6.05) node[right]
    {$\color{ceruleanblue}rQ(r)-p(r)$};

\draw[{latex[scale=.8]}-{latex[scale=.8]},line width=1.pt]
    (5.5,2.2) -- (5.5,3)
    node[midway,right,yshift=0.5ex] {\footnotesize $-p_0$};

\draw[line width=1.5pt] (1,.15) -- (1,-.15)
    node[below] {$\und r$};
\draw[line width=1.5pt] (4,.15) -- (4,-.15)
    node[below] {$r_L$};
\draw[line width=1.5pt] (7,.15) -- (7,-.15)
    node[below] {$r_H$};
\draw[line width=1.5pt] (10.5,.15) -- (10.5,-.15)
    node[below] {$\BAR r$};

\end{tikzpicture}
\caption{Agent utility in a three-tier Pareto improvement.}
\label{fig:three}
\end{figure}

For a three-tier mechanism whose intermediate quality is $Q_0$, a strict Pareto improvement occurs precisely when the intermediate tier is subsidized.  Consider a feasible distribution $F$ with qualities $Q_L<Q_0<Q_H$, as illustrated in \Cref{fig:three}. Let $r_L<r_H$ denote its threshold types and $p_0$ the payment in the intermediate tier. Utility gains relative to the equity benchmark decrease with type below $r_L$, remain constant between $r_L$ and $r_H$, and increase above $r_H$, so they are smallest for agents in the intermediate tier. Their utility gain is $-p_0$, the vertical difference between the two utility schedules in \Cref{fig:three}. Writing $\pi_H=1-G(r_H)$ for the mass assigned to the high tier, \cref{eq:decomposition} evaluated at $r_L$ gives
\begin{equation}\label{eq:middle-tier-gain}
-p_0=\(r_H-r_L\)\pi_H\(Q_H-Q_0\)-r_L\left[Q_0-\E_F[Q]\right].
\end{equation}

The separation between the threshold types allows sorting gains to offset the loss in average quality.  At $r_L$, the jump to the intermediate tier contributes no sorting gain, while the jump to the high tier contributes the first term in \cref{eq:middle-tier-gain}.  Because $r_H>r_L$, the quality gain in the high tier can be priced at a higher marginal willingness to pay than is required to compensate for the quality loss in the low tier.

Our construction keeps the threshold types separated while making the loss in average quality small relative to the sorting gains.  We first use \Cref{ax:smoothness} to select a feasible distribution with a small loss in average quality relative to the cumulative improvement above $Q_0$.  We then use \Cref{ax:randomization} to contract it into three tiers, with intermediate quality $Q_0$.  The contraction preserves average quality and retains enough of the cumulative improvement to finance a subsidy for the intermediate tier.

\paragraph{Proof of \texorpdfstring{\Cref{thm:three}}{Theorem 2}.}
Fix the threshold types $r_L:=G^{-1}(1/3)$ and $r_H:=G^{-1}(2/3)$.  Choose $\e>0$ sufficiently small, such that
\[0<\e<\frac{r_H-r_L}{2r_H+r_L}<\frac12.\]
By \Cref{ax:smoothness}, there exists $F\in\calF$ such that
\[\E_F[\(Q-Q_0\)_+] > 0\AND \frac{Q_0-\E_F[Q]}{\E_F[\(Q-Q_0\)_+]} \leq \e.\]
Define 
\[Q_L\coloneq Q_0-\E_F[\(Q_0-Q\)_+]\AND Q_H\coloneq 2\E_F[Q]+\E_F[\(Q-Q_0\)_+]-Q_0.\]
It is easy to see that $0\leq Q_L<Q_0$; by our earlier choice of $\e$, $Q_0<Q_H$.  The definition of $Q_0$ implies that
\[Q_H = Q_0 + \E_F[\(Q-Q_0\)_+] - 2\left[Q_0-\E_F[Q]\right]\leq Q_0 + \E_F[\(Q-Q_0\)_+]\leq 1.\]
Thus, we can define the three-tier distribution
\[F^{(3)}\coloneq\frac13\d_{Q_L} + \frac13\d_{Q_0} + \frac13\d_{Q_H}.\]

We now show that the distribution $F^{(3)}$ is a mean-preserving contraction of $F$ and hence feasible by \Cref{ax:randomization}.  For $z\leq Q_0$, Jensen's inequality implies that 
\begin{align*}
    \E_{F^{(3)}}[\(z-Q\)_+] = \frac13\(z-Q_L\)_+ 
    &=\frac13\left[z-Q_0 + \E_F[\(Q_0-Q\)_+]\right]_+\\
    &\leq\frac13\E_F[\(z-Q\)_+]\leq \E_F[\(z-Q\)_+].
\end{align*}
Similarly, for $z>Q_0$,
\begin{align*}
    \E_{F^{(3)}}[\(Q-z\)_+] = \frac13\(Q_H-z\)_+ 
    &\leq\frac13\left[Q_0 + \E_F[\(Q-Q_0\)_+]-z\right]_+\\
    &\leq\frac13\E_F[\(Q-z\)_+]\leq \E_F[\(Q-z\)_+].
\end{align*}
Since $F^{(3)}$ and $F$ share the same mean, it follows that $F^{(3)}$ is a mean-preserving contraction of $F$.

Finally, we show that the implementing mechanism strictly improves every agent's utility.  By \cref{eq:middle-tier-gain}, its intermediate-tier payment satisfies
\begin{align*}
    -p_0
    &=\frac{r_H-r_L}{3}\E_F[\(Q-Q_0\)_+]-\frac{2r_H+r_L}{3}\left[Q_0-\E_F[Q]\right]\\
    &\geq\frac{\E_F[\(Q-Q_0\)_+]}{3}\left[r_H-r_L-\(2r_H+r_L\)\e\right]>0.
\end{align*}
Here, the final inequality follows from our earlier choice of $\e$.  We conclude that $F^{(3)}$ results in a strict Pareto improvement, as claimed.\qed

The proof of \Cref{thm:three} shows that a strict Pareto improvement can be achieved with three equally sized tiers.  The tier masses can thus be fixed independently of the distribution of agents' types.

\Cref{eq:middle-tier-gain} also characterizes which feasible three-tier distributions with intermediate quality $Q_0$ yield a strict Pareto improvement.

\begin{corollary}\label{cor:three}
A feasible three-tier distribution
$F=\pi_L\d_{Q_L}+\(1-\pi_L-\pi_H\)\d_{Q_0}+\pi_H\d_{Q_H}$, where $Q_L<Q_0<Q_H$, $\pi_L,\pi_H>0$, and $\pi_L+\pi_H<1$,  yields a strict Pareto improvement over the equity benchmark if and only if 
\[
\frac{Q_H-Q_0}{Q_0-Q_L}
>
\frac{\pi_L}{\pi_H}
\frac{G^{-1}(\pi_L)}{G^{-1}(1-\pi_H)}.
\]
\end{corollary}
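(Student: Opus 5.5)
We reduce strict Pareto improvement to the positivity of the intermediate-tier subsidy $-p_0$, and then compute $-p_0$ explicitly from \cref{eq:middle-tier-gain}.

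\textbf{Step 1: reduce to the middle tier.} Under the implementing mechanism, $Q(r)=Q_L$ on $[\und r,r_L)$, $Q(r)=Q_0$ on $(r_L,r_H)$ and $Q(r)=Q_H$ on $(r_H,\BAR r]$, where $r_L=G^{-1}(\pi_L)$ and $r_H=G^{-1}(1-\pi_H)$. Since $g>0$ and $\pi_L,\pi_H>0$ with $\pi_L+\pi_H<1$, we have $\und r<r_L<r_H<\BAR r$.

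By the envelope formula, the gain $\Delta(r):=rQ(r)-p(r)-rQ_0$ is continuous and has derivative $Q(r)-Q_0$ almost everywhere. Hence $\Delta$ is decreasing on $[\und r,r_L]$, constant on $[r_L,r_H]$ and increasing on $[r_H,\BAR r]$. Its minimum over $[\und r,\BAR r]$ therefore equals the constant value on the middle interval, which is $-p_0$. So $F$ is a strict Pareto improvement if and only if $-p_0>0$. Both directions hold because the middle interval has positive length and its types attain the minimum.

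\textbf{Step 2: compute $-p_0$.} Evaluate \cref{eq:decomposition} at $r=r_L$. The first sorting integral places mass only at $r_L$ itself, where the weight $(r-s)$ vanishes, so it contributes nothing. The second integral picks up the jump $Q_H-Q_0$ at $r_H$ with weight $(r_H-r_L)[1-G(r_H)]=(r_H-r_L)\pi_H$. This gives \cref{eq:middle-tier-gain}. Now substitute
\[Q_0-\E_F[Q]=\pi_L(Q_0-Q_L)-\pi_H(Q_H-Q_0).\]
Collecting terms yields
\[-p_0=r_H\pi_H(Q_H-Q_0)-r_L\pi_L(Q_0-Q_L).\]

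\textbf{Step 3: rearrange.} Dividing by the positive quantity $r_H\pi_H(Q_0-Q_L)$, the condition $-p_0>0$ becomes
\[\frac{Q_H-Q_0}{Q_0-Q_L}>\frac{\pi_L}{\pi_H}\,\frac{r_L}{r_H}.\]
Substituting the expressions for the thresholds gives exactly the stated inequality.

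\textbf{Main obstacle.} The only delicate point is applying \cref{eq:decomposition} exactly at the jump point $r_L$, since $Q$ is discontinuous there. This is harmless: the Stieltjes atom at $r_L$ carries weight $(r_L-r_L)G(r_L)=0$, and by continuity of utility the value at $r_L$ agrees with the constant value on the middle tier.

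The argument does not use \Cref{ax:friction}. Feasibility of $F$ is assumed in the statement, so the result is purely a characterization.
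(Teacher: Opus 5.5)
Your proof is correct and follows the paper's own argument. Like the paper, you reduce strict Pareto improvement to positivity of the intermediate-tier subsidy $-p_0$ (gains fall below $r_L$, stay flat on the middle tier, and rise above $r_H$), evaluate \cref{eq:decomposition} at $r_L$ to obtain \cref{eq:middle-tier-gain}, rewrite it as $-p_0=\pi_H r_H(Q_H-Q_0)-\pi_L r_L(Q_0-Q_L)$ (the form the paper uses in its appendix applications), and rearrange.
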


\Cref{cor:three} relates the scope for a Pareto improvement to heterogeneity in agents' rates of substitution.  As $\pi_L=\pi_H\downarrow0$, the right-hand side of the condition converges to $\und r/\BAR r$, a simple measure of this heterogeneity and hence of the potential gains from sorting.\footnote{This observation is a point of contact with \citet{ateretal26}, who emphasize that driver heterogeneity is crucial for assessing the welfare gains of congestion pricing.}  In \Cref{sec:insurance}, we exploit this observation to relax \Cref{ax:smoothness} by comparing the loss in average quality directly with these gains.

\section{Applications}
\label{sec:applications}

This section applies and generalizes the main results of \Cref{sec:main} to four settings: lane pricing, waiting in line, public provision of goods with heterogeneous quality, and insurance under adverse selection.

\subsection{Lane Pricing}
\label{sec:lane} 

We begin with lane pricing, our leading application and primary motivation for the framework.  Recall that there are $N\geq2$ lanes; if a mass $m_i$ of agents is assigned to lane $i\in\{1,2,\ldots,N\}$, each agent in that lane experiences travel time $w(m_i)$.  Congestion effects are captured by $w:[0,1]\to[\und t,\BAR t]$, which is continuously differentiable, increasing, and convex, with $0\leq\und t<\BAR t<1$.  A travel time $t$ corresponds to effective quality $v(1-t)$, where $v$ is continuously differentiable, increasing, and concave. 

The laissez-faire allocation in this setting provides a natural candidate for the equity benchmark.  In the absence of pricing, agents distribute themselves evenly across lanes, resulting in the lane assignment $\bm_0 = \(1/N,1/N,\ldots,1/N\)$. The corresponding laissez-faire distribution assigns full probability to the quality $Q_0\coloneq v(1-w(1/N))$. 

In this setting, the laissez-faire allocation uniquely maximizes average quality across all feasible allocations, as shown in \Cref{sec:framework}. The convexity and strict monotonicity of the congestion technology induce a mean--dispersion tradeoff: any dispersion in qualities requires a departure from equal lane usage, which necessarily decreases average quality.  At the same time, this tradeoff is locally mild.  Specifically, the decrease in average quality induced by small amounts of dispersion is of higher order, as formally demonstrated in \Cref{app:lanes}.

Given that \Cref{ax:randomization,ax:friction,ax:smoothness} hold in this setting, our main results apply: \Cref{thm:two,thm:three} imply that there exists a strict Pareto improvement with three tiers, but no Pareto improvement with two tiers.  As shown in \Cref{app:lanes}, our main results extend to deterministic lane assignments: 

\begin{proposition}\label{prop:lane} 
When $N\geq3$, there exists a mechanism using only deterministic lane assignments that strictly Pareto-improves on the laissez-faire outcome.  When $N=2$, no mechanism using only deterministic lane assignments can be a Pareto improvement.
\end{proposition}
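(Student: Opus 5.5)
The two halves of \Cref{prop:lane} call for different arguments. For $N=2$ I will reduce to \Cref{thm:two}. For $N\geq3$ I will build a deterministic three-lane analogue of the construction in the proof of \Cref{thm:three}, with small load perturbations, and check the criterion of \Cref{cor:three} directly.

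\emph{Case $N=2$.} A deterministic mechanism fixes loads $(m_1,m_2)$ and assigns each agent to one lane. The induced distribution of expected qualities is then $F^{\bm}$ itself, which has at most two support points. If $m_1=m_2=1/2$, it is the one-tier benchmark and no agent strictly gains. If $m_1\neq m_2$, incentive compatibility gives a two-tier threshold mechanism. By \Cref{ax:friction}, shown earlier to hold in this setting, $\E_{F^{\bm}}[Q]<Q_0$, so \Cref{thm:two} applies and some positive mass of agents is strictly worse off. Degenerate loads (one lane empty) give a single tier with quality $v(1-w(1))<Q_0$ and zero payments, which makes everyone worse off.

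\emph{Case $N\geq3$.} Take loads $m_1=1/N+\eta$, $m_3=1/N-\eta$, and $m_i=1/N$ for the remaining lanes, with $\eta>0$ small. Write $\psi(m)\coloneq v(1-w(m))$, which is decreasing and $C^1$. The qualities are then $Q_H=\psi(1/N-\eta)>Q_0$ on lane $3$, $Q_L=\psi(1/N+\eta)<Q_0$ on lane $1$, and $Q_0$ on the other $N-2\geq1$ lanes. This is a three-tier distribution with intermediate quality exactly $Q_0$, and it is implementable with deterministic lane assignments: low types take lane $1$, middle types the $Q_0$ lanes, high types lane $3$, with payments from the envelope formula. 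The tier masses are $\pi_L=1/N+\eta$ and $\pi_H=1/N-\eta$.

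By \Cref{cor:three}, this is a strict Pareto improvement if and only if
\[
\frac{\psi(1/N-\eta)-\psi(1/N)}{\psi(1/N)-\psi(1/N+\eta)}>\frac{1/N+\eta}{1/N-\eta}\cdot\frac{G^{-1}(1/N+\eta)}{G^{-1}(1-1/N+\eta)}.
\]
As $\eta\downarrow0$, the left side tends to $1$ whenever $\psi'(1/N)\neq0$. The right side tends to $G^{-1}(1/N)/G^{-1}(1-1/N)$, which is strictly less than $1$ because $N\geq3$ gives $1/N<1-1/N$ and $G^{-1}$ is strictly increasing ($g>0$). The strict gap between the limits then yields the inequality for all sufficiently small $\eta$.

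The main obstacle is the left-side limit, which needs $\psi'(1/N)=-v'(1-w(1/N))\,w'(1/N)<0$. The maintained assumptions only make $v$ and $w$ increasing, so either derivative could vanish at that point. If needed, I will drop the derivative argument and use concavity of $\psi$ instead. Concavity holds because $w$ is convex and increasing and $v$ is concave and increasing. It gives $\psi(1/N-\eta)-\psi(1/N)\geq\psi(1/N)-\psi(1/N+\eta)$, so the left-hand ratio is at least $1$ for every $\eta$, and strict monotonicity of $\psi$ keeps both differences positive. With the limit comparison on the right-hand side, the strict inequality follows for small $\eta$. This concavity route avoids any derivative assumption, so I will present it as the main argument. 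This is also where $N\geq3$ matters: it provides a lane left at $Q_0$, and it makes the threshold ratio on the right-hand side strictly less than one.
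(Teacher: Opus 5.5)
\textbf{The step you choose as your main argument has its inequality reversed.} Your treatment of $N=2$ is fine. Your $N\geq3$ construction also matches the paper: the same load perturbation, and the criterion of \Cref{cor:three}, which is just $-p_0>0$ in \cref{eq:middle-tier-gain}. But concavity of $\psi$ gives $2\psi(1/N)\geq\psi(1/N-\eta)+\psi(1/N+\eta)$, that is,
\[
\psi(1/N)-\psi(1/N+\eta)\;\geq\;\psi(1/N-\eta)-\psi(1/N).
\]
So $(Q_H-Q_0)/(Q_0-Q_L)\leq 1$ for every $\eta$, not $\geq 1$. This is the friction at work: each agent in the slow lane loses more quality than each agent in the fast lane gains. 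For instance, linear $v$ and quadratic $w$ give $\psi(m)=a-bm^2$, and then the ratio is $(2/N-\eta)/(2/N+\eta)<1$. Hence the comparison ``left side $\geq 1>$ right side'' is unavailable. The criterion can fail for moderate $\eta$: with $N=3$, $\eta=0.1$ and $G$ uniform on $[1,2]$, the left side is about $0.74$ and the right side about $1.5$. Smallness of $\eta$ must therefore do work on the left side as well.

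\textbf{The repair is your first route, which is essentially the paper's proof.} The paper expands $-p_0$ to first order in the perturbation. Your reason for abandoning that route is unfounded. $\psi$ is concave and strictly decreasing on $[0,1]$, so $\psi'(1/N)=0$ would make the interior point $1/N$ a global maximizer of $\psi$, contradicting strict decrease. Equivalently, strict monotonicity plus concavity forces $v'>0$ on $\R_+$, and strict monotonicity plus convexity forces $w'(1/N)>0$; this is what justifies the paper's ``$v',w'>0$.'' With $\psi'$ continuous and $\psi'(1/N)<0$, the mean value theorem shows the left side tends to $1$. The right side tends to $G^{-1}(1/N)/G^{-1}(1-1/N)<1$, so the strict inequality holds for all sufficiently small $\eta>0$. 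Differentiability is not dispensable. At a concave kink, the left side would converge to $\psi'_-(1/N)/\psi'_+(1/N)\leq 1$, which can lie strictly below the right-hand limit. So the derivative-free argument you intended cannot work.
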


\Cref{prop:lane} has a simple practical interpretation. With at least three physical lanes, the Pareto improvement can be implemented deterministically: commuters can be assigned to fast, regular, and slow lanes, with no need to randomize their access. Prices can then be calibrated so that the regular lane has the same level of congestion as in the laissez-faire allocation.  

With only two physical lanes, some randomization is necessary to generate a Pareto improvement, but this need not require literal lotteries. It could instead be implemented through restrictions on access over time---for example, by allowing commuters choosing the intermediate tier to use the fast lane only on certain days or for a limited amount of time.

Our positive results rely on the ability to redistribute toll revenues to commuters. As emphasized in the congestion-pricing literature (e.g., \citealp{hall18}), such redistribution may be difficult in practice. Importantly, however, our proof of \Cref{thm:three} implies that redistribution need not be perfect. Fixing the distribution $G$ of commuters' types, there exists a cutoff $\BAR \alpha<1$ such that a Pareto improvement remains possible whenever a fraction $\alpha>\BAR \alpha$ of toll revenues can be returned to commuters. Such redistribution could be implemented indirectly, for example through reductions in other taxes or fees paid by drivers, or through credits deposited into electronic toll accounts that can ultimately be redeemed for cash.\footnote{What our framework abstracts from is the possible extensive-margin responses to the introduction of priority pricing. By definition, a strict Pareto improvement among existing commuters increases the attractiveness of driving and may therefore induce additional traffic.}

\subsection{Waiting in Line}
\label{sec:line}

We next consider a designer who can \emph{endogenously create lines} by partitioning agents and dedicating processing capacity to each line, as in airport security screening or access to public services. Unlike our lane-pricing application, where congestion depends only on the mass assigned to each lane, here congestion depends on demand relative to capacity in each line.

In this setting, agents arrive at a constant rate $\la>0$, and the designer has total processing capacity $\mu>0$.  Denote the demand--capacity ratio by $\rho\coloneq\la/\mu\in(0,1)$, which we refer to as the \emph{load} of the entire system.  A deterministic line design consists of a finite number of lines $K\in\{1,2,\ldots\}$ together with routing shares $\bm=(m_1,\ldots,m_K)\in\Delta^{K-1}$ and capacity shares $\bs=(s_1,\ldots,s_K)\in\Delta^{K-1}$.  Given such a design $(K,\bm,\bs)$, line $i\in\{1,2,\ldots,K\}$ is assigned arrival rate $\la m_i$ and capacity $\mu s_i$, so its load is 
\[\rho_i \coloneq \frac{\la m_i}{\mu s_i} = \rho\frac{m_i}{s_i}.\]
We restrict attention to designs in which $m_i,s_i>0$ and $\rho_i<1$ for every $i$.
 
We model delay using a reduced-form congestion technology.  Each agent assigned to line $i$ faces the deterministic waiting time $w(\rho_i)$, where $w:(0,1)\to[\und t,\BAR t]$ is continuously differentiable, increasing, and convex, with $0\leq\und t<\BAR t<1$.  The key property of this technology is \emph{scale-freeness}: if both demand and capacity in a line are scaled by the same factor, the load $\rho_i$ is unchanged and so is the wait time.  As in the lane-pricing application, the quality associated with wait time $w(\rho_i)$ is $v(1-w(\rho_i))$.

Next, we specify the feasible set of expected quality distributions.  We identify each deterministic line design $(K,\bm,\bs)$ with the quality distribution that it induces,
\[F_{(K,\bm,\bs)} \coloneq \sum_{i=1}^Km_i \d_{v(1-w(\rho_i))}.\] 
The designer can randomize agents' assignments within a deterministic line design, so the feasible set is 
\[\calF = \left\{ F\in\Delta([0,1]): F\in\MPC(F_{(K,\bm,\bs)}) \text{ for some deterministic line design }(K,\bm,\bs) \right\}.\]

Under the laissez-faire outcome, all agents use a single line.  This corresponds to a design with $K=1$ and $m_1=s_1=1$.  Each agent receives quality $Q_0=v(1-w(\rho))$, so the induced distribution is $\d_{Q_0}$.

All three axioms hold in this setting, with the laissez-faire expected quality distribution as the equity benchmark.  \Cref{ax:randomization} holds by construction; \Cref{app:line} verifies \Cref{ax:friction,ax:smoothness} by an argument similar to the lane-pricing application.  Intuitively, making one line faster requires allocating more capacity to it relative to demand, leaving another line with less capacity relative to demand.  Because waiting time is convex in line load, this dispersion increases average wait time.  Moreover, small departures from proportional capacity allocation create first-order dispersion at a higher-order cost to the average.

Since \Cref{ax:randomization,ax:friction,ax:smoothness} hold in this setting, \Cref{thm:two,thm:three} apply.  As in the lane-pricing application, the results extend to deterministic assignments to lines:
\begin{proposition}\label{prop:line} 
There exists a mechanism using only deterministic assignments to three lines that strictly Pareto-improves on the laissez-faire outcome.  No mechanism using only deterministic assignments to at most two lines can be a Pareto improvement. 
\end{proposition}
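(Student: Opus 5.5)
The proposal has two parts: a negative part (at most two deterministic lines) and a positive part (three deterministic lines).

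\emph{Negative part.} A deterministic assignment to at most two lines induces a distribution $F_{(K,\bm,\bs)}$ with $K\le 2$, which has at most two support points.

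If it has one support point, every agent receives the same quality. This quality is at most $Q_0$, and incentive compatibility together with budget balance force payments to be zero, so no agent strictly gains.

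If it has two support points, the distribution is a two-tier member of $\calF$. \Cref{ax:friction} holds here (\Cref{app:line}), so \Cref{thm:two} rules out a Pareto improvement directly. One subtlety must be handled: a two-line design with $\rho_1=\rho_2$ collapses to one tier, and that case is covered by the one-support-point argument.

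\emph{Positive part.} Here I would mimic the proof of \Cref{thm:three}, but build a three-tier distribution that is realized exactly by a deterministic three-line design rather than obtained by contraction. Take $K=3$ with routing shares $m_i=1/3$, which fixes the tier masses at $\pi_L=\pi_H=1/3$. Choose capacity shares
\[
s_2=\tfrac13,\qquad s_1=\tfrac13-\eta,\qquad s_3=\tfrac13+\eta ,
\]
where $\eta>0$ is small. Then the middle line has load $\rho_2=\rho$, so it delivers exactly quality $Q_0$. The loads are
\[
\rho_1=\frac{\rho}{1-3\eta}>\rho,\qquad \rho_3=\frac{\rho}{1+3\eta}<\rho ,
\]
and both stay below $1$ for $\eta$ small. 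Set $Q_L=v(1-w(\rho_1))$ and $Q_H=v(1-w(\rho_3))$; since $w$ and $v$ are increasing, $Q_L<Q_0<Q_H$.

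By \Cref{cor:three}, this design is a strict Pareto improvement if and only if
\[
\frac{Q_H-Q_0}{Q_0-Q_L}>\frac{G^{-1}(1/3)}{G^{-1}(2/3)} .
\]
The right-hand side is a fixed constant strictly below $1$, because $G$ has positive density on $[\und r,\BAR r]$, so $G^{-1}(1/3)<G^{-1}(2/3)$. It therefore suffices to show that the left-hand ratio tends to $1$ as $\eta\downarrow0$. Define $h(x)\coloneq v(1-w(\rho/x))$, which is continuously differentiable near $x=1$. Then
\[
Q_H-Q_0=h(1+3\eta)-h(1)\AND Q_0-Q_L=h(1)-h(1-3\eta).
\]
Both differences equal $3\eta\, h'(1)+\calo(\eta)$. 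Moreover,
\[
h'(1)=\rho\, v'(1-w(\rho))\, w'(\rho),
\]
so $h'(1)>0$ provided $v'(1-w(\rho))>0$ and $w'(\rho)>0$. Granting this, the ratio tends to $1$, and the condition of \Cref{cor:three} holds for small $\eta$.

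\emph{Main obstacle.} The key step is the strict positivity of the derivatives $v'$ and $w'$ at the benchmark. ``Increasing'' does not guarantee a nonzero derivative at a point.

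If a derivative vanishes, I would fall back on the secant ratio of the monotone function $h$ directly. If $v$ is concave and $w$ is convex, the map $x\mapsto h(x)$ can be related to concavity in $\rho/x$, which gives
\[
Q_H-Q_0\ \gtrsim\ \text{comparable to}\ Q_0-Q_L .
\]
Alternatively, I would choose asymmetric capacity shifts $s_1=1/3-\eta_1$ and $s_3=1/3+\eta_3$, with $\eta_3$ picked by the intermediate value theorem so that $Q_H-Q_0=Q_0-Q_L$. Since line 2 keeps $m_2=s_2=1/3$, we have $s_1+s_3=2/3$, so any $\eta_3\neq\eta_1$ is offset by moving capacity to or from the middle line: set $s_2=1/3-(\eta_3-\eta_1)$, which requires $m_2\neq 1/3$ and hence $\rho_2\neq\rho$ unless $m_2=s_2$. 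To keep the middle tier at exactly $Q_0$, I would therefore adjust routing shares proportionally to capacity shares on that line, and allow unequal tier masses $\pi_L,\pi_H$ in \Cref{cor:three}. This is the step I expect to need the most care.
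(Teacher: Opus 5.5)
Your proposal is correct and is essentially the paper's proof. The negative part is the one-tier argument plus \Cref{thm:two}. The positive part builds a deterministic three-line design whose middle line keeps load exactly $\rho$ while the outer loads sit slightly above and below $\rho$, and then feeds a first-order expansion of $Q_H-Q_0$ and $Q_0-Q_L$ into the sign condition of \cref{eq:middle-tier-gain} (equivalently \Cref{cor:three}); you perturb capacity shares at equal routing shares, whereas the paper perturbs routing shares at fixed capacities, which makes no difference.

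The step you flag as the main obstacle is automatic; the paper simply asserts $v',w'>0$. Your $h$ is concave, because $v$ is concave and increasing, $w$ is convex and increasing, and $x\mapsto\rho/x$ is convex. It is also strictly increasing on $(\rho,\infty)$. So $h'(1)=0$ would force $h'\leq 0$ on $(1,\infty)$, a contradiction, and hence $h'(1)>0$. Your fallbacks can therefore be dropped, and they would not have worked anyway: concavity only bounds $Q_H-Q_0$ above by $Q_0-Q_L$, and equalizing the gaps just moves the same degeneracy into $\pi_L/\pi_H$.
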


The implementation suggested by \Cref{prop:line} is again simple.  The designer can partition existing capacity across three lines, assigning relatively more capacity per user to the fast line, relatively less to the slow line, and preserving the laissez-faire demand--capacity ratio in the intermediate line.  In applications, this can be implemented through separate queues or service windows with different capacity allocations.

\subsection{Public Provision With Heterogeneous Quality} 
\label{sec:provision}

Our third application considers environments in which the designer allocates goods of heterogeneous physical quality. It illustrates how the mean--dispersion tradeoff underlying our results can arise from economic frictions unrelated to congestion.

We consider a unit mass of agents who demand a good of heterogeneous physical quality, $q\in[0,1]$.  For example, $q$ might represent the size of a public housing unit.  Each agent privately observes his type $r\in[\und r,\BAR r]\subset\R_+$.  An agent of type $r$ who receives physical quality $q$ and pays price $p$ obtains utility $rv(q)-p$, where $v:[0,1]\to[0,1]$ is continuously differentiable, increasing, and strictly concave.

A capacity-constrained designer has $q_0\in(0,1)$ units of physical quality to allocate per capita.  For instance, $q_0$ might represent the total amount of space that a public housing authority can subdivide into individual units.  Given an allocation function $q:[\und r,\BAR r]\to[0,1]$, feasibility requires that the total physical quality allocated not exceed available capacity: 
\[\int_{\und r}^{\BAR r}q(r)\,\dd G(r) \leq q_0.\] 

To apply the framework of \Cref{sec:framework}, we express this constraint in terms of effective quality.  Because $Q=v(q)$ and $v$ is increasing, providing effective quality $Q$ requires physical quality $v^{-1}(Q)$.  Accordingly, the feasible set of expected effective quality distributions is 
\[\calF = \left\{F\in\Delta([0,v(1)]): \E_F\!\left[v^{-1}(Q)\right]\leq q_0\right\}.\]
Because $v$ is concave, any lottery delivering expected effective quality $Q$ uses at least $v^{-1}(Q)$ units of expected physical quality; assigning physical quality $v^{-1}(Q)$ with certainty attains this bound.

In the absence of pricing, the designer cannot use quality to screen agents; a natural candidate for the equity benchmark is therefore uniform provision at physical quality $q_0$. This allocation provides every agent with effective quality $Q_0\coloneq v(q_0)$.

The feasible set satisfies all three axioms.  \Cref{ax:randomization} follows because $v^{-1}$ is convex, so the capacity constraint is preserved under mean-preserving contractions.  We verify \Cref{ax:friction} below and \Cref{ax:smoothness} in \Cref{app:provision}.  For any feasible distribution $F$, Jensen's inequality gives 
\[\E_F[Q] = \E_F\!\left[v\!\left(v^{-1}(Q)\right)\right] \leq v\!\(\E_F\!\left[v^{-1}(Q)\right]\) \leq v(q_0) = Q_0.\] 
Because $v$ is strictly concave, equality holds if and only if physical quality is almost surely constant at $q_0$.  Thus, the unique maximizing feasible distribution is $\d_{Q_0}$.  Any dispersion in physical quality around the equity benchmark necessarily decreases average \textit{effective} quality due to decreasing marginal utility from physical quality, yielding the key mean--dispersion tradeoff formalized by \Cref{ax:friction}. 

Since \Cref{ax:randomization,ax:friction,ax:smoothness} hold in this public-provision environment, our main results apply: 
\begin{proposition}\label{prop:quality} 
Any Pareto improvement over the equity benchmark must offer at least three levels of expected quality.  Moreover, there exists a strict Pareto improvement that offers three levels of expected quality. 
\end{proposition}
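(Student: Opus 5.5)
The plan is to reduce \Cref{prop:quality} to \Cref{thm:two,thm:three}. This requires verifying \Cref{ax:randomization,ax:friction,ax:smoothness} for
\[\calF = \left\{F\in\Delta([0,v(1)]): \E_F\!\left[v^{-1}(Q)\right]\leq q_0\right\},\]
and checking that the equity benchmark of the framework is $\d_{Q_0}$ with $Q_0=v(q_0)$.

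\Cref{ax:randomization} holds because $v^{-1}$ is convex. If $\bar F\in\MPC(F)$, then $\E_{\bar F}[v^{-1}(Q)]\le\E_F[v^{-1}(Q)]\le q_0$ by the convex-order characterization of mean-preserving contractions, and $\bar F$ is still supported in $[0,v(1)]$. \Cref{ax:friction} is the Jensen argument given in the text. It shows that $\sup_{F\in\calF}\E_F[Q]=v(q_0)$, that the supremum is attained only at $\d_{v(q_0)}$, and that $\d_{Q_0}\in\calF$. Hence the framework's $Q_0$ coincides with $v(q_0)$.

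The remaining step, and the only substantive one, is \Cref{ax:smoothness}. I would use a symmetric two-point perturbation in physical quality. For small $\eta>0$ with $q_0\pm\eta\in[0,1]$ (possible since $q_0\in(0,1)$), let $F_\eta$ place mass $1/2$ on $v(q_0+\eta)$ and mass $1/2$ on $v(q_0-\eta)$. It uses exactly $q_0$ units of capacity, so $F_\eta\in\calF$.

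The numerator of \eqref{eq:smoothness} is
\[Q_0-\E_{F_\eta}[Q]=v(q_0)-\tfrac12\left[v(q_0+\eta)+v(q_0-\eta)\right].\]
By continuous differentiability of $v$ and the mean value theorem, this equals $\tfrac{\eta}{2}\left[v'(\xi_-)-v'(\xi_+)\right]$ for some $\xi_\pm$ within $\eta$ of $q_0$. It is therefore $\calo(\eta)$.

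The denominator is
\[\E_{F_\eta}[(Q-Q_0)_+]=\tfrac12\left[v(q_0+\eta)-v(q_0)\right]=\tfrac{\eta}{2}v'(\zeta)\]
for some $\zeta$ within $\eta$ of $q_0$. It is positive because $v$ is increasing.

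The main obstacle is that the ratio of numerator to denominator tends to zero only if $v'(q_0)>0$, which strict monotonicity alone does not guarantee. If $v'(q_0)>0$, the ratio is $[v'(\xi_-)-v'(\xi_+)]/v'(\zeta)\to0$ as $\eta\downarrow 0$ by continuity of $v'$. Otherwise $v'(q_0)=0$. Concavity together with strict increase then forces $v'>0$ on $[0,q_0)$, so $v'$ vanishes only from $q_0$ onward, and strict increase rules out this case. Even so, I would also handle it directly: since $v$ is concave, $v'$ is nonincreasing, so the ratio is bounded by $[v'(q_0-\eta)-v'(q_0+\eta)]/v'(q_0+\eta)$. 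This tends to zero whenever $v'(q_0)>0$, which is what strict increase plus concavity gives at interior points. Hence \eqref{eq:smoothness} holds for all sufficiently small $\eta$.

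With the three axioms in place, \Cref{thm:two} rules out Pareto improvements with two tiers. A single tier is also impossible: a common quality $Q\le Q_0$ with zero payments cannot strictly benefit a positive mass. So any Pareto improvement must offer at least three levels of expected quality. \Cref{thm:three} then delivers a strict Pareto improvement with exactly three levels. If desired, its three expected-quality levels can be implemented deterministically by assigning physical qualities $v^{-1}(Q_L),q_0,v^{-1}(Q_H)$. By the remark on Jensen in the text, this uses no more capacity than the lottery implementation.
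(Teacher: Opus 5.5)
Your proposal is correct and follows the paper's route. You verify \Cref{ax:randomization,ax:friction,ax:smoothness} for the capacity-constrained set: randomization from convexity of $v^{-1}$, friction from Jensen, and smoothness from a symmetric perturbation $q_0\pm\eta$ in physical quality whose average-quality loss is of higher order than $\eta$ while the gain above $Q_0$ is of order $\eta$. You then invoke \Cref{thm:two,thm:three} together with the one-tier argument, exactly as the paper does. The only differences are cosmetic: the paper perturbs a mass $m\in(0,1/2)$ on each side and keeps the rest at $q_0$ (you take $m=1/2$), and you state explicitly that strict monotonicity and concavity force $v'(q_0)>0$ at the interior point $q_0$, a fact the paper uses without comment.
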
 

Public housing provides a natural interpretation of this result.  Public housing programs often restrict households to a relatively limited set of options, making uniform provision a natural benchmark (e.g., \citealp{olsen03}; \citealp{sitaramanalstott19}).  Singapore's public housing system provides a useful contrast: the public housing authority chooses among different types of apartments to build and allocate, allowing households to express preferences over these alternatives (e.g., \citealp{ferdowsianetal26}). Our results suggest that differentiation---in particular, in size---need not come at the expense of equity. If prices are appropriately adjusted, offering at least three quality levels can generate a Pareto improvement over uniform provision, even if dispersion in physical quality lowers average effective quality (e.g., due to decreasing marginal utility over size). 

\subsection{Insurance Under Adverse Selection} 
\label{sec:insurance} 

Our final application shows how our main results generalize beyond the axioms introduced in \Cref{sec:framework}.  We consider an insurance program with a fixed budget, where the main friction generating the mean--dispersion tradeoff arises from adverse selection. Employer-provided health insurance (e.g., \citealp{einavetal10}) provides a natural example: a self-insured employer with a fixed benefits budget may offer plans of different generosity to employees whose willingness to pay for coverage is positively correlated with their expected medical costs.

Unlike the preceding applications, the resource cost of coverage depends on the type of the agent who receives it due to adverse selection.  We first argue that the feasible set satisfies \Cref{ax:randomization,ax:friction}.  We then show that \Cref{ax:smoothness} does not hold.  Finally, we provide a more general condition under which a three-tier Pareto improvement nevertheless exists. 

Let $Q\in[0,1]$ denote the generosity of insurance coverage.  A type-$r$ agent who receives coverage $Q$ and makes a premium adjustment $p$ obtains utility $rQ-p$.  Providing one unit of coverage to this agent generates expected cost $c(r)$ to the designer, where $c:[\und r,\BAR r]\to\R_{++}$ is continuous and increasing.  Thus, willingness to pay and expected claims are positively associated, as in a standard adverse-selection environment. A simple example is when $r$ reflects the probability of using the insurance, while $c(r)$ also incorporates the expected expenditure conditional on use.

The equity benchmark provides uniform coverage $Q_0\in(0,1)$ with no premium adjustments.  An allocation $Q:[\und r,\BAR r]\to[0,1]$ is feasible if the aggregate expected cost does not exceed that under the equity benchmark: 
\[\int_{\und r}^{\BAR r}c(r)Q(r)\,\dd G(r) \leq Q_0 \int_{\und r}^{\BAR r}c(r)\,\dd G(r).\]
Premium adjustments must satisfy budget balance.  To express the feasibility constraint in terms of distributions of expected quality, define $C(u)\coloneq c(G^{-1}(u))$ for every $u\in[0,1]$.  Because incentive compatibility assigns higher coverage to higher types, an expected quality distribution $F$ results in
\[\int_{\und r}^{\BAR r} c(r)Q(r)\,\dd G(r) = \int_0^1C(u)F^{-1}(u)\,\dd u\] 
units of expected cost.  Accordingly, the feasible set is 
\[\calF = \left\{F\in\Delta([0,1]): \int_0^1C(u)F^{-1}(u)\,\dd u \leq Q_0\int_0^1C(u)\,\dd u \right\}.\] 

We first verify \Cref{ax:randomization}.  For any $\bar F\in\MPC(F)$, since $\dd C$ is a nonnegative measure under our maintained assumptions, integration by parts yields
\begin{align*}
\int_0^1 C(u)\bar F^{-1}(u)\,\dd u 
&= C(1)\int_0^1 \bar F^{-1}(z)\,\dd z - \int_0^1\int_0^u \bar F^{-1}(z)\,\dd z\,\dd C(u)\\
&\leq C(1)\int_0^1 F^{-1}(z)\,\dd z - \int_0^1\int_0^u F^{-1}(z)\,\dd z\,\dd C(u) =\int_0^1 C(u) F^{-1}(u)\,\dd u.
\end{align*}
Thus, every mean-preserving contraction of a feasible distribution remains feasible, and \Cref{ax:randomization} holds. 

Next, we verify \Cref{ax:friction}.  Since both $C$ and $F^{-1}$ are nondecreasing, Chebyshev's integral inequality gives 
\[\int_0^1 C(u)F^{-1}(u)\,\dd u \geq \left[\int_0^1C(u)\,\dd u\right] \left[\int_0^1F^{-1}(u)\,\dd u\right] = \E_F[Q]\int_0^1C(u)\,\dd u.\] Feasibility therefore implies that $\E_F[Q]\leq Q_0$.  Because $C$ is increasing, this inequality is strict whenever $F$ is nondegenerate.  Hence, the distribution $\d_{Q_0}$ uniquely maximizes average quality, and \Cref{ax:friction} holds. 

Unlike in the preceding applications, however, the mean--dispersion tradeoff is not locally mild, due to adverse selection. Increasing coverage for high types is more expensive than reducing the same amount of coverage for low types saves, so feasibility requires a non-negligible additional reduction in coverage at the bottom. As a result, the loss in average quality cannot be of lower order than the quality gains at the top. In \Cref{app:insurance}, we show that \Cref{ax:smoothness} fails.

Even without \Cref{ax:smoothness}, however, a three-tier strict Pareto improvement remains possible if high types value coverage sufficiently highly relative to their expected cost.  We consider the condition 
\begin{equation}\label{eq:adverse-B}
\frac{\BAR r}{c(\BAR r)} > \frac{\und r}{c(\und r)}. \tag{B} 
\end{equation} 
The ``bang-for-the-buck'' ratio $r/c(r)$ measures a type's willingness to pay for coverage per unit of expected cost.  While this ratio may vary nonmonotonically with type, condition~\eqref{eq:adverse-B} requires this ratio to be higher at the upper endpoint than at the lower endpoint.

\begin{proposition}
\label{prop:insurance} 
There does not exist a two-tier Pareto improvement over uniform coverage.  However, under condition~\eqref{eq:adverse-B}, there exists a three-tier strict Pareto improvement.
\end{proposition}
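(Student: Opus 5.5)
The first claim follows directly from \Cref{thm:two}. Its proof uses only \Cref{ax:friction} and the decomposition \eqref{eq:decomposition}, and both hold here: the decomposition is a consequence of incentive compatibility and budget balance alone, and \Cref{ax:friction} was verified above via Chebyshev's inequality. So in any two-tier mechanism the threshold type obtains utility $r_0\E_F[Q]<r_0Q_0$. By continuity, a positive mass of types near $r_0$ is strictly worse off. No Pareto improvement is possible with two tiers.

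For the positive claim, I would bypass \Cref{ax:smoothness} and use \Cref{cor:three} directly. Its derivation relies only on \eqref{eq:middle-tier-gain}, and hence again only on incentive compatibility and budget balance. I would take $\pi_L=\pi_H=\pi$ small, with tiers $Q_L=Q_0-a$, $Q_0$, and $Q_H=Q_0+b$, assigned to the bottom $\pi$, middle $1-2\pi$, and top $\pi$ quantiles. Since coverage is increasing in type, the cost of this allocation is
\[
Q_0\int_0^1C\,\dd u + b\int_{1-\pi}^1 C\,\dd u - a\int_0^{\pi}C\,\dd u .
\]
Feasibility then requires $b\,\bar C_H(\pi)\le a\,\bar C_L(\pi)$, where $\bar C_H(\pi)$ and $\bar C_L(\pi)$ denote the average costs on the top and bottom $\pi$-quantiles. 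I would set $b/a=\bar C_L(\pi)/\bar C_H(\pi)$ exactly, taking $a$ small enough that $Q_L\ge0$ and $Q_H\le1$; this is possible because $Q_0\in(0,1)$. By \Cref{cor:three}, the result is a strict Pareto improvement if
\[
\frac{\bar C_L(\pi)}{\bar C_H(\pi)}>\frac{G^{-1}(\pi)}{G^{-1}(1-\pi)}.
\]
As $\pi\downarrow0$, continuity of $c$ and $G^{-1}$ sends the left side to $c(\und r)/c(\BAR r)$ and the right side to $\und r/\BAR r$. Condition~\eqref{eq:adverse-B} is exactly the strict inequality $c(\und r)/c(\BAR r)>\und r/\BAR r$. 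So for $\pi$ small enough the required inequality holds, and such a $\pi$ exists.

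Two details need checking. The first is that $\und r>0$ should not be needed. If $\und r=0$ the right side tends to $0$, and since costs are strictly positive the left side stays positive, so the argument still works. The second is that the condition must be applied carefully, and this is the main obstacle. \Cref{cor:three} is stated for feasible distributions within the paper's abstract framework, with $Q_0$ defined as the supremum of average quality. Here $Q_0$ is given exogenously, so I would confirm that $\d_{Q_0}$ is indeed the maximizer; \Cref{ax:friction} gives this. I would also confirm that \eqref{eq:middle-tier-gain} is unaffected by the type-dependent cost, which holds because the cost enters only through feasibility, not through utilities or payments. Finally, the proof of \Cref{cor:three} also checks that the low and high tiers are better off, since gains are minimized on the middle tier. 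That step uses only the piecewise-linear shape of the utility schedule, so it carries over unchanged.
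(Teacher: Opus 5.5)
Your proposal is correct and follows the paper's proof essentially step for step. For the two-tier claim, it invokes \Cref{thm:two} after verifying \Cref{ax:friction} via Chebyshev; for the three-tier claim, it uses extreme tiers of equal mass with $Q_0-Q_L=\e\,\bar C_H$ and $Q_H-Q_0=\e\,\bar C_L$, so the cost constraint binds exactly, and picks the mass small enough that condition~\eqref{eq:adverse-B} transfers by continuity. The only cosmetic difference is that you invoke \Cref{cor:three}, whereas the paper computes $-p_0=\e\bigl[r_H\int_0^mC\,\dd u-r_L\int_{1-m}^1C\,\dd u\bigr]>0$ directly from \cref{eq:middle-tier-gain}, which is the same inequality.
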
 

In the context of employer-provided health insurance, \Cref{prop:insurance} implies that---with a fixed benefits budget and budget-balanced premium adjustments---offering a choice between a low-coverage and a high-coverage plan cannot make all employees better off compared to a simple uniform-coverage scheme. A third, intermediate-coverage plan can achieve a Pareto improvement when employees with the highest willingness to pay for coverage have sufficiently high willingness to pay relative to their expected claims.

Without some restriction on the relationship between willingness to pay and expected cost, a Pareto improvement need not exist.  Suppose, for example, that $\und r>0$ and $c(r)=r$.  For any feasible, budget-balanced mechanism,
\[\int_{\und r}^{\BAR r}\left[rQ(r)-p(r)-rQ_0\right]\,\dd G(r) = \int_{\und r}^{\BAR r}r\left[Q(r)-Q_0\right]\,\dd G(r) = \int_{\und r}^{\BAR r}c(r)\left[Q(r)-Q_0\right]\,\dd G(r)\leq 0.\]
A Pareto improvement would make the first integrand nonnegative for every type and strictly positive for a positive mass of types, contradicting this inequality.  Thus, no Pareto improvement exists in this case, regardless of the number of tiers.

Condition~\eqref{eq:adverse-B} motivates a generalization of \Cref{ax:smoothness}.  While our framework in \Cref{sec:framework} deliberately abstracted from variable production costs for simplicity, such costs can be accommodated:

\setcounter{reaxiomno}{2}
\begin{reaxiom}[Generalized Smoothness]\label{ax:g-smoothness}
There exists a continuous function $\kappa:[0,1]\to\R_{++}$ such that:
\begin{enumerate}[label=\emph{(\roman*)}]
\item \label{it:g-1} For every bounded nondecreasing function $h:[0,1]\to\R$ satisfying $\int_0^1 \kappa(u)h(u)\,\dd u<0$, there exists a sequence $\{(\e_n,F_n)\}_{n=1}^\infty$ such that $\e_n\downarrow 0$, $F_n\in\calF$, and
\[\int_0^1\left\lvert\frac{F_n^{-1}(u)-Q_0}{\e_n} - h(u)\right\rvert\,\dd u\to0.\]
\item \label{it:g-2} The extreme types satisfy
\[\frac{G^{-1}(1)}{\kappa(1)} > \frac{G^{-1}(0)}{\kappa(0)}.\]
\end{enumerate}
\end{reaxiom}

\setcounter{rethmno}{1}
\begin{retheorem}\label{thm:g-three}
Under Axioms~\ref{ax:randomization} and \ref{ax:g-smoothness}\textsuperscript{*}, there exists a strict Pareto improvement with three tiers.
\end{retheorem}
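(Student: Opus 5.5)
We mimic the proof of \Cref{thm:three}, replacing the explicit construction from \Cref{ax:smoothness} with a limiting argument along the sequence supplied by part~\ref{it:g-1} of \Cref{ax:g-smoothness}\textsuperscript{*}. First fix a target direction. For a small $\pi\in(0,1/3)$ and constants $a,b>0$, let
\[h(u)\coloneq -a\mathbf 1\{u<\pi\}+b\mathbf 1\{u>1-\pi\}.\]
This $h$ is bounded and nondecreasing. It satisfies $\int_0^1\kappa h<0$ exactly when
\[b\int_{1-\pi}^1\kappa<a\int_0^\pi\kappa.\]
Set $r_L\coloneq G^{-1}(\pi)$ and $r_H\coloneq G^{-1}(1-\pi)$. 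As $\pi\downarrow0$, continuity of $\kappa$ and $G^{-1}$ gives
\[\frac{\int_0^\pi\kappa}{\int_{1-\pi}^1\kappa}\to\frac{\kappa(0)}{\kappa(1)},\qquad \frac{r_L}{r_H}\to\frac{G^{-1}(0)}{G^{-1}(1)}.\]
By part~\ref{it:g-2}, $\frac{r_L}{r_H}<\frac{\int_0^\pi\kappa}{\int_{1-\pi}^1\kappa}$ for $\pi$ small. We fix such a $\pi$ and choose $b/a$ strictly between these two ratios. This makes $\int\kappa h<0$ and also gives $r_Hb>r_La$.

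Next, apply part~\ref{it:g-1} to obtain $(\e_n,F_n)$ with $F_n^{-1}\approx Q_0+\e_nh$ in $L^1$. We contract each $F_n$ to three tiers by averaging its quantile function over the blocks $[0,\pi]$, $[\pi,1-\pi]$, and $[1-\pi,1]$. The resulting distribution $\hat F_n$ puts masses $\pi$, $1-2\pi$, $\pi$ on the three block averages. It is a mean-preserving contraction of $F_n$, since it is the conditional expectation of $F_n^{-1}(U)$ given the block of $U$; one can also check the integral criterion defining $\MPC$ directly, because averaging a nondecreasing function over consecutive blocks raises every partial integral $\int_0^x$ and leaves the total unchanged. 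Hence $\hat F_n\in\calF$ by \Cref{ax:randomization}. The $L^1$ convergence implies that the scaled block levels $(\hat Q_k^n-Q_0)/\e_n$ converge to $-a$, $0$, $b$. In particular the three levels are distinct for large $n$, so $\hat F_n$ offers exactly three tiers with thresholds $r_L<r_H$ that do not depend on $n$.

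Finally, we evaluate utilities via \cref{eq:decomposition}. For a three-tier mechanism with fixed thresholds, the gain $rQ(r)-p(r)-rQ_0$ is an explicit linear expression in the three quality levels. Dividing by $\e_n$, it therefore converges uniformly in $r\in[\und r,\BAR r]$ to the gain $\Gamma(r)$ of the limiting three-tier schedule with levels $-a,0,b$, masses $\pi,1-2\pi,\pi$, and thresholds $r_L,r_H$. This $\Gamma$ is convex and piecewise linear with slopes $-a$, $0$, $b$, so its minimum is its constant value on $[r_L,r_H]$. By \cref{eq:middle-tier-gain}, or by evaluating the decomposition at $r_L$, this value is
\[r_L\pi(b-a)+(r_H-r_L)\pi b=\pi(r_Hb-r_La)>0.\]
Uniform convergence then gives strictly positive gains for every type once $n$ is large, so $\hat F_n$ is a strict Pareto improvement with three tiers.

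The main obstacle is the first step: choosing $\pi$ and $b/a$ so that the direction is feasible to first order while the middle tier still gains, which is precisely where condition~\ref{it:g-2} enters. A secondary point is that the middle block average equals $Q_0$ only up to $o(\e_n)$, unlike in \Cref{thm:three}. This is why the argument relies on uniform convergence of the scaled utility gains rather than on exact pinning of the intermediate tier.
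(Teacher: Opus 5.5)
Your proof is correct, and its skeleton matches the paper's. It uses the same two-step direction $h$ (the paper normalizes $a=1$ and sets $b=h(1)$ at the midpoint of the two ratios), the same use of condition~\emph{(ii)} to choose the block size, and the same quantile-block pooling into masses $\pi,1-2\pi,\pi$.

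The two routes differ only in how the intermediate tier is handled. The paper performs a second mean-preserving contraction: it pools the middle block with a sliver of the top or bottom tier so that the intermediate quality is exactly $Q_0$. The gain schedule is then exactly flat on the middle tier and minimized there for every $n$. Strictness thus reduces to the sign of the single number $-p_{0,n}$ from \eqref{eq:middle-tier-gain}, at the cost of a two-case construction and thresholds $r_{L,n},r_{H,n}$ that move with $n$.

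You instead keep masses and thresholds fixed. You exploit the fact that, for fixed thresholds, the gain is linear in the deviations $\hat Q^n_k-Q_0$, with coefficients bounded uniformly in $r$. Hence the normalized gain converges uniformly to the convex schedule $\Gamma$. Its minimum $\pi(r_Hb-r_La)$ is exactly the paper's limit $m[G^{-1}(1-m)h(1)-G^{-1}(m)]$.

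Your version is shorter and avoids the extra pooling step. The paper's version yields a mechanism whose intermediate tier sits exactly at the benchmark quality, matching the interpretation emphasized in the main text. Yours only guarantees an intermediate tier at $Q_0+o(\e_n)$, which suffices for the statement.
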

\begin{proof}
See \Cref{app:additional}.
\end{proof}

Intuitively, \Cref{ax:g-smoothness}\textsuperscript{*} replaces the requirement that the loss in average quality be of higher order with a condition on the local resource tradeoff.  The function $\kappa(u)$ can be interpreted as the marginal resource cost of increasing expected quality at quantile $u$.  Condition~\emph{\ref{it:g-1}} requires every bounded, nondecreasing perturbation that strictly reduces resource use to be approximated by arbitrarily small feasible departures from the equity benchmark.  Condition~\emph{\ref{it:g-2}} requires the highest type to have a higher willingness to pay per unit of marginal resource cost than the lowest type.  Together, these conditions make it possible to reduce quality for a small mass of low types and increase quality for a small mass of high types, while retaining $Q_0$ as the intermediate tier.  The resulting gains can then be redistributed through payments so that every type is strictly better off.

In our insurance application, condition~\emph{\ref{it:g-1}} of \Cref{ax:g-smoothness}\textsuperscript{*} holds with $\kappa=C$, while condition~\emph{\ref{it:g-2}} is equivalent to condition~\eqref{eq:adverse-B}.  This shows that \Cref{ax:g-smoothness}\textsuperscript{*} can hold even when \Cref{ax:smoothness} fails.  In contrast, under \Cref{ax:randomization,ax:friction}, \Cref{ax:smoothness} implies \Cref{ax:g-smoothness}\textsuperscript{*} (see  \Cref{app:additional}).

\section{Concluding Remarks}
\label{sec:conclusion}

This paper shows that, under economically natural assumptions, three priority tiers are necessary and sufficient for a Pareto improvement over the equity benchmark.  With only two tiers, the agent indifferent between them receives no sorting gain to offset the loss in average quality and is therefore worse off.  A third tier separates the threshold between the low and intermediate tiers from the threshold between the intermediate and high tiers.  This separation allows sorting gains to outweigh the loss in average quality for every agent.

Our framework isolates this logic without specifying interpersonal welfare comparisons or committing to a particular allocation technology.  It applies to lane pricing, waiting-line design, and public provision, and extends to insurance under adverse selection, where the resource cost of quality depends on the recipient.  More broadly, our results show that the distributional consequences of priority pricing depend not only on whether prices are introduced, but also on how the priority tiers are designed.  Policy debates framed as a choice between uniform access and a two-tier priority system therefore overlook a simple alternative: a third tier can allow the gains from priority pricing to be shared by all agents.

\bibliography{master_bibliography.bib}
\bibliographystyle{econ-econometrica}

\clearpage
\appendix

\section{Omitted Proofs}
\label{app:proofs}

\subsection{Proofs for \texorpdfstring{\Cref{sec:lane}}{Section 4.1}}\label{app:lanes}

We first argue that \Cref{ax:smoothness} holds in this application. Consider a perturbation of the laissez-faire allocation.  For $\e\in(0,1/N)$, define 
\[\bm_\e \coloneq \(\frac1N-\e,\frac1N,\ldots,\frac1N,\frac1N+\e\),\] 
so that a mass $\e$ of agents is shifted from lane 1 to lane $N$, while all other lanes retain their laissez-faire masses.  Writing $F^\e$ for the induced distribution of qualities, the resulting loss in average quality is 
\[Q_0-\E_{F^\e}[Q] = 2\phi\!\(\frac1N\)-\phi\!\(\frac1N-\e\)-\phi\!\(\frac1N+\e\)=\calo(\e).\] 
The final equality follows from the differentiability of $\phi$: the two first-order effects cancel.  By contrast, the increase in dispersion---measured by the cumulative improvement above $Q_0$ received by agents in lane 1---is 
\[\E_{F^\e}\!\left[\(Q-Q_0\)_+\right] = \(\frac1N-\e\)\left[v\!\(1-w\!\(\frac1N-\e\)\)-v\!\(1-w\!\(\frac1N\)\)\right] = \frac{\e}Nv'\!\(1-w\!\(\frac1N\)\)w'\!\(\frac1N\)+\calo(\e).\] 
Because $v',w'>0$, this expression is positive for all sufficiently small $\e$ and is of order $\e$.  Hence, \Cref{ax:smoothness} holds:
\[\lim_{\e\downarrow0} \frac{Q_0-\E_{F^\e}[Q]}{\E_{F^\e}[\(Q-Q_0\)_+]} = 0.\] 

We now prove \Cref{prop:lane}. Suppose first that $N=2$.  Any deterministic lane assignment induces a quality distribution with at most two tiers.  A one-tier distribution cannot be a Pareto improvement: incentive compatibility requires all payments to be equal, budget balance requires them to be zero, and feasibility implies that the common quality is no greater than $Q_0$.  A two-tier distribution cannot be a Pareto improvement by \Cref{thm:two}, since \Cref{ax:friction} holds in this setting.  Hence, no mechanism using only deterministic lane assignments can be a Pareto improvement when $N=2$.

Now suppose that $N\geq3$.  For $\e\in(0,1/N)$, consider the deterministic lane assignment
\[\bm_\e\coloneq\(\frac1N-\e,\frac1N,\ldots,\frac1N,\frac1N+\e\).\]
The induced distribution $F^\e$ has three tiers.  Writing
\[Q_L\coloneq v\!\(1-w\!\(\frac1N+\e\)\)\AND Q_H\coloneq v\!\(1-w\!\(\frac1N-\e\)\),\]
these tiers are $Q_L<Q_0<Q_H$, with respective masses $1/N+\e$, $(N-2)/N$, and $1/N-\e$.

Consider the mechanism implementing $F^\e$.  The threshold types between the low and intermediate tiers and between the intermediate and high tiers are
\[r_L=G^{-1}\!\(\frac1N+\e\)\AND r_H=G^{-1}\!\(1-\frac1N+\e\).\]
By \cref{eq:middle-tier-gain}, the intermediate-tier payment $p_0$ satisfies
\[-p_0=\(\frac1N-\e\)r_H\(Q_H-Q_0\)-\(\frac1N+\e\)r_L\(Q_0-Q_L\).\]
As $\e\downarrow0$,
\[\begin{dcases}
Q_H-Q_0&=v'\!\(1-w\!\(\frac1N\)\)w'\!\(\frac1N\)\e+\calo(\e),\\
Q_0-Q_L&=v'\!\(1-w\!\(\frac1N\)\)w'\!\(\frac1N\)\e+\calo(\e).
\end{dcases}\]
Therefore,
\[-p_0 = \frac{\e}{N}v'\!\(1-w\!\(\frac1N\)\)w'\!\(\frac1N\)\left[G^{-1}\!\(1-\frac1N\)-G^{-1}\!\(\frac1N\)\right]+\calo(\e).\]
Because $N\geq3$, we have $1-1/N>1/N$.  Since $G^{-1}$ is strictly increasing and $v',w'>0$, it follows that $-p_0>0$ for all sufficiently small $\e>0$.

By the argument in \Cref{sec:proofs}, the positive intermediate-tier subsidy implies that the mechanism strictly Pareto-improves on the laissez-faire outcome for all sufficiently small $\e>0$.

\subsection{Proofs for \texorpdfstring{\Cref{sec:line}}{Section 4.2}}\label{app:line}

We first prove that \Cref{ax:friction,ax:smoothness} hold. As in our lane-pricing application, since randomization does not affect average effective quality, it suffices to show that every deterministic line design has average effective quality at most $Q_0$ and that every design attaining $Q_0$ induces the distribution $\d_{Q_0}$.  To this end, define $\phi(y)\coloneq yv(1-w(\rho y))$; since $w$ is increasing and convex on $(0,1)$, it follows that $\phi:(0,1/\rho)\to\R$ is strictly concave.  For any deterministic line design $(K,\bm,\bs)$, average effective quality can be written as 
\[\sum_{i=1}^Ks_i \phi\!\(\frac{m_i}{s_i}\)\leq \phi\!\(\sum_{i=1}^Ks_i\frac{m_i}{s_i}\) = \phi(1) = v(1-w(\rho)) = Q_0.\] 
Since $\phi$ is strictly concave, equality holds if and only if $m_i/s_i$ is constant across lines.  Because $\sum_{i=1}^Km_i=\sum_{i=1}^Ks_i=1$, this constant must be 1.  Hence, equality holds if and only if $m_i=s_i$ for every $i$, in which case every line has load $\rho$ and the induced distribution is $\d_{Q_0}$.  Thus, \Cref{ax:friction} holds.

\Cref{ax:smoothness} is also satisfied in this setting.  To see this, fix $\eta\in(0,1/2)$.  For any $\e>0$ sufficiently small that $\rho\(1+\e\)<1$, consider a deterministic three-line design with capacity and routing shares 
\[(s^F,s^0,s^S) = (\eta,1-2\eta,\eta) \AND (m^F,m^0,m^S) = \(\eta\(1-\e\),1-2\eta,\eta\(1+\e\)\).\] 
The corresponding loads are $\rho\(1-\e\)$, $\rho$, and $\rho\(1+\e\)$, so the induced distribution $F_\e$ has three support points 
\[v\!\(1-w\!\(\rho\(1-\e\)\)\) > Q_0 = v(1-w(\rho)) > v\!\(1-w\!\(\rho\(1+\e\)\)\).\] 
These have corresponding masses $\eta\(1-\e\)$, $1-2\eta$, and $\eta\(1+\e\)$.  By construction, the loss in average effective quality relative to the equity benchmark is 
\[Q_0-\E_{F_\e}[Q] = \eta\left[2\phi(1)-\phi(1-\e)-\phi(1+\e)\right] = \calo(\e).\] 
By contrast, the cumulative improvement above $Q_0$ received by agents in the fast line is 
\[\E_{F_\e}\!\left[\(Q-Q_0\)_+\right] = \eta\(1-\e\) \left[v\!\(1-w\!\(\rho\(1-\e\)\)\)-v(1-w(\rho))\right]= \eta\rho v'(1-w(\rho))w'(\rho)\e+\calo(\e). \] 
Because $v',w'>0$, this expression is positive for all sufficiently small $\e$ and is of order $\e$.  Consequently, \Cref{ax:smoothness} holds:
\[\lim_{\e\downarrow0} \frac{Q_0-\E_{F_\e}[Q]}{\E_{F_\e}[\(Q-Q_0\)_+]}=0.\] 

We now sketch the proof of \Cref{prop:line}, which closely parallels the proof of \Cref{prop:lane}.

No mechanism using only deterministic assignments to at most two lines can be a Pareto improvement.  Such assignments induce at most two levels of expected quality.  A one-tier distribution cannot be a Pareto improvement by the argument in \Cref{sec:main}; a two-tier distribution cannot be a Pareto improvement by \Cref{thm:two}, since \Cref{ax:friction} holds in this setting.

Now suppose $K=3$.  Fix $\eta\in(0,1/2)$ and, for $\e>0$ sufficiently small that $\rho(1+\e)<1$, consider the deterministic three-line design
\[(s^F,s^0,s^S)=(\eta,1-2\eta,\eta) \AND (m^F,m^0,m^S)=\(\eta(1-\e),1-2\eta,\eta(1+\e)\),\]
already used above to verify \Cref{ax:smoothness}.  As shown there, the induced distribution $F_\e$ assigns mass $\pi_H\coloneq\eta(1-\e)$ to $Q_H\coloneq v(1-w(\rho(1-\e)))$, mass $1-2\eta$ to $Q_0$, and mass $\pi_L\coloneq\eta(1+\e)$ to $Q_L\coloneq v(1-w(\rho(1+\e)))$, with $Q_L<Q_0<Q_H$.

Consider the mechanism implementing $F_\e$.  The threshold types between the slow and intermediate lines and between the intermediate and fast lines are
\[r_L\coloneq G^{-1}\!\(\eta(1+\e)\)\AND r_H\coloneq G^{-1}\!\(1-\eta(1-\e)\).\]
By \cref{eq:middle-tier-gain}, the intermediate-tier payment $p_0$ satisfies
\[-p_0=\pi_Hr_H\(Q_H-Q_0\)-\pi_Lr_L\(Q_0-Q_L\).\]
By the same argument as used above,
\[Q_H-Q_0=\rho\,v'\!\(1-w(\rho)\)w'(\rho)\,\e+\calo(\e)\AND Q_0-Q_L=\rho\,v'\!\(1-w(\rho)\)w'(\rho)\,\e+\calo(\e).\]
As $\e\downarrow0$, $r_L\to G^{-1}(\eta)$ and $r_H\to G^{-1}(1-\eta)$, so
\[-p_0=\eta\rho\,v'\!\(1-w(\rho)\)w'(\rho)\left[G^{-1}(1-\eta)-G^{-1}(\eta)\right]\e+\calo(\e).\]
Because $\eta\in(0,1/2)$, we have $1-\eta>\eta$; since $G^{-1}$ is strictly increasing and $v',w'>0$, it follows that $-p_0>0$ for all sufficiently small $\e>0$.

By the argument in \Cref{sec:proofs}, the positive intermediate-tier subsidy implies that the mechanism strictly Pareto-improves on the laissez-faire outcome for all sufficiently small $\e>0$.

\subsection{Proofs for \texorpdfstring{\Cref{sec:provision}}{Section 4.3}}\label{app:provision}

We show that \Cref{ax:smoothness} holds.  Fix $m\in(0,1/2)$ and, for $\e>0$ sufficiently small, consider a three-tier allocation that assigns mass $m$ to physical quality $q_0+\e$, mass $m$ to physical quality $q_0-\e$, and the remaining mass to physical quality $q_0$.  This perturbation preserves average physical quality and is therefore feasible.  Writing the induced distribution of effective qualities as 
\[F_\e = m\d_{v(q_0-\e)} + \(1-2m\)\d_{v(q_0)} + m\d_{v(q_0+\e)},\] 
the corresponding loss in average effective quality is 
\[Q_0-\E_{F_\e}[Q] = m\left[2v(q_0)-v(q_0-\e)-v(q_0+\e)\right] = \calo(\e).\] 
However, the cumulative improvement above $Q_0$ received by the mass $m$ of agents assigned physical quality $q_0+\e$ is 
\[\E_{F_\e}\!\left[\(Q-Q_0\)_+\right] = m\left[v(q_0+\e)-v(q_0)\right] = mv'(q_0)\e + \calo(\e).\] 
Consequently, \Cref{ax:smoothness} holds, and \Cref{prop:quality} follows from \Cref{thm:two,thm:three}:
\[\lim_{\e\downarrow0} \frac{Q_0-\E_{F_\e}[Q]}{\E_{F_\e}[\(Q-Q_0\)_+]} = 0.\] 

\subsection{Proofs for \texorpdfstring{\Cref{sec:insurance}}{Section 4.4}}\label{app:insurance}

We first show that the setting of \Cref{sec:insurance} does not satisfy \Cref{ax:smoothness}.
For any feasible $F$ such that $\E_F[\(Q-Q_0\)_+]>0$, feasibility implies that both $\Pr_F[Q<Q_0]$ and $\Pr_F[Q>Q_0]$ are positive.  Chebyshev's integral inequality then implies that
\[\begin{dcases}
\int_{1-\Pr_F[Q>Q_0]}^1C(u)\left[F^{-1}(u)-Q_0\right]\,\dd u &\geq \frac{\E_F[\(Q-Q_0\)_+]}{\Pr_F[Q>Q_0]}\int_{1-\Pr_F[Q>Q_0]}^1C(u)\,\dd u,\\
\int_0^{\Pr_F[Q<Q_0]}C(u)\left[Q_0-F^{-1}(u)\right]\,\dd u &\leq \frac{\E_F[\(Q_0-Q\)_+]}{\Pr_F[Q<Q_0]}\int_0^{\Pr_F[Q<Q_0]}C(u)\,\dd u.
\end{dcases}\]
Feasibility requires 
\begin{align*}
&\int_{1-\Pr_F[Q>Q_0]}^1C(u)\left[F^{-1}(u)-Q_0\right]\,\dd u \leq \int_0^{\Pr_F[Q<Q_0]}C(u)\left[Q_0-F^{-1}(u)\right]\,\dd u\\
&\implies \frac{\E_F[\(Q_0-Q\)_+]}{\Pr_F[Q<Q_0]}\int_0^{\Pr_F[Q<Q_0]}C(u)\,\dd u\geq \frac{\E_F[\(Q-Q_0\)_+]}{\Pr_F[Q>Q_0]}\int_{1-\Pr_F[Q>Q_0]}^1C(u)\,\dd u.
\end{align*}
Consequently, 
\[\frac{Q_0-\E_F[Q]}{\E_F[\(Q-Q_0\)_+]} = \frac{\E_F[\(Q_0-Q\)_+]}{\E_F[\(Q-Q_0\)_+]} - 1 \geq \frac{\Pr_F[Q<Q_0]\int_{1-\Pr_F[Q>Q_0]}^1C(u)\,\dd u}{\Pr_F[Q>Q_0]\int_0^{\Pr_F[Q<Q_0]}C(u)\,\dd u} - 1.\]
The expression on the right is the ratio of the average of $C$ over the uppermost $\Pr_F[Q>Q_0]$ quantiles to its average over the lowermost $\Pr_F[Q<Q_0]$ quantiles, minus one.  Moreover, this lower bound is uniform.  Viewed as a function of the two probabilities, the expression on the right extends continuously to the compact set of pairs in $[0,1]^2$ whose sum is at most one, where the lower and upper averages at zero equal $C(0)$ and $C(1)$, respectively.  It is strictly positive throughout this set because $C$ is strictly increasing, and so it has a strictly positive minimum.  Thus, \Cref{ax:smoothness} does not hold in this setting.

We now prove \Cref{prop:insurance}.

Because \Cref{ax:friction} holds, \Cref{thm:two} immediately implies that there does not exist a two-tier Pareto improvement over uniform coverage.

We next establish the positive result. Under condition~\eqref{eq:adverse-B},
\[\lim_{m\downarrow0}\frac{G^{-1}(1-m)}{\frac1m\int_{1-m}^1C(u)\,\dd u}=\frac{\BAR r}{c(\BAR r)}>\frac{\und r}{c(\und r)}=\lim_{m\downarrow0}\frac{G^{-1}(m)}{\frac1m\int_0^mC(u)\,\dd u}.\]
Hence, by continuity, there exists $m\in(0,1/2)$ such that
\[\frac{G^{-1}(1-m)}{\frac1m\int_{1-m}^1C(u)\,\dd u}>\frac{G^{-1}(m)}{\frac1m\int_0^mC(u)\,\dd u}.\]
Define
\[Q_L\coloneq Q_0-\frac{\e}{m}\int_{1-m}^1C(u)\,\dd u\AND Q_H\coloneq Q_0+\frac{\e}{m}\int_0^mC(u)\,\dd u.\]
For $\e>0$ sufficiently small, consider the three-tier distribution
\[F^\e\coloneq m\d_{Q_L}+\(1-2m\)\d_{Q_0}+m\d_{Q_H}.\]
Because $Q_0\in(0,1)$, all three support points lie in $[0,1]$ and are distinct when $\e$ is sufficiently small.

This distribution is feasible. Indeed, relative to uniform coverage, its aggregate expected cost is
\begin{align*}
\int_0^1C(u)\left[(F^\e)^{-1}(u)-Q_0\right]\dd u
&=-\frac{\e}{m}\left[\int_0^mC(u)\,\dd u\right]\left[\int_{1-m}^1C(u)\,\dd u\right]\\
&\qquad+\frac{\e}{m}\left[\int_{1-m}^1C(u)\,\dd u\right]\left[\int_0^mC(u)\,\dd u\right]=0.
\end{align*}

The implementing mechanism strictly Pareto-improves on uniform coverage.  Let $r_L=G^{-1}(m)$ and $r_H=G^{-1}(1-m)$ denote the two threshold types.  By \cref{eq:middle-tier-gain} and our earlier choice of $m$, the intermediate-tier premium adjustment $p_0$ satisfies
\begin{align*}
-p_0
&=mr_H\(Q_H-Q_0\)-mr_L\(Q_0-Q_L\)=\e\left[r_H\int_0^m C(u)\,\dd u-r_L\int_{1-m}^1 C(u)\,\dd u\right]>0.
\end{align*}
By the argument in \Cref{sec:proofs}, the positive intermediate-tier subsidy implies that $F^\e$ yields a three-tier strict Pareto improvement over uniform coverage.

\section{Additional Results and Generalizations}
\label{app:additional}

\subsection{Proof of \texorpdfstring{\Cref{thm:g-three}\textsuperscript{*}}{Theorem 2*}}

By condition~\emph{\ref{it:g-2}} and continuity,
\[\lim_{m\downarrow0}\frac{G^{-1}(1-m)}{\frac1m\int_{1-m}^1\kappa(u)\,\dd u}=\frac{G^{-1}(1)}{\kappa(1)}>\frac{G^{-1}(0)}{\kappa(0)}=\lim_{m\downarrow0}\frac{G^{-1}(m)}{\frac1m\int_0^m\kappa(u)\,\dd u}.\]
Hence, there exists $m\in(0,1/2)$ such that
\[\frac{G^{-1}(m)}{G^{-1}(1-m)}<\frac{\int_0^m\kappa(u)\,\dd u}{\int_{1-m}^1\kappa(u)\,\dd u}.\]
Fix such an $m$ and define the bounded nondecreasing function
\[h(u)\coloneq
\begin{dcases}
\frac12\left[\frac{G^{-1}(m)}{G^{-1}(1-m)}+\frac{\int_0^m\kappa(z)\,\dd z}{\int_{1-m}^1\kappa(z)\,\dd z}\right],& 1-m\leq u\leq 1,\\
0,& m\leq u<1-m,\\
-1,& 0\leq u<m.
\end{dcases}\]
By construction,
\[\int_0^1\kappa(u)h(u)\,\dd u=-\int_0^m\kappa(u)\,\dd u+h(1)\int_{1-m}^1\kappa(u)\,\dd u<0.\]
By condition~\emph{\ref{it:g-1}}, we obtain a sequence $\{(\e_n,F_n)\}_{n=1}^\infty$ such that $\e_n\downarrow0$, $F_n\in\calF$, and
\begin{equation}\label{eq:bound}
\int_0^1\left\lvert\frac{F_n^{-1}(u)-Q_0}{\e_n}-h(u)\right\rvert\,\dd u\to0.
\end{equation}

Next, for each $n$, we construct a three-tier mean-preserving contraction $\hat F_n^{(3)}$ of $F_n$ by pooling qualities within three quantile intervals.  To this end, define
\[Q_{L,n}\coloneq \frac1m\int_0^mF_n^{-1}(u)\,\dd u,\qquad Q_{M,n}\coloneq\frac1{1-2m}\int_m^{1-m}F_n^{-1}(u)\,\dd u,\qquad Q_{H,n}\coloneq\frac1m\int_{1-m}^1F_n^{-1}(u)\,\dd u.\]
Consider
\[\hat F_n^{(3)}\coloneq m\d_{Q_{L,n}}+\(1-2m\)\d_{Q_{M,n}}+m\d_{Q_{H,n}}.\]
Since $\hat F_n^{(3)}$ preserves the conditional means within each quantile interval, $\hat F_n^{(3)}\in\MPC(F_n)$ and is feasible by \Cref{ax:randomization}.  Moreover, \cref{eq:bound} implies that
\[Q_{L,n}=Q_0-\e_n+\calo(\e_n),\qquad Q_{M,n}=Q_0+\calo(\e_n),\qquad Q_{H,n}=Q_0+h(1)\e_n+\calo(\e_n).\]
Thus, for all sufficiently large $n$, 
\[Q_{L,n}<Q_0<Q_{H,n}\AND Q_{L,n}<Q_{M,n}<Q_{H,n}.\]

We now modify $\hat F_n^{(3)}$ to obtain a three-tier distribution $F_n^{(3)}$ whose intermediate tier is exactly $Q_0$.  If $Q_{M,n}=Q_0$, no further pooling is required.  We thus consider two cases:
\begin{enumerate}
\item If $Q_{M,n}<Q_0$, pool the entire mass at $Q_{M,n}$ with enough of the mass at $Q_{H,n}$ for the pooled mass to have mean $Q_0$. 
\item If $Q_{M,n}>Q_0$, pool the entire mass at $Q_{M,n}$ with enough of the mass at $Q_{L,n}$ for the pooled mass to have mean $Q_0$.    
\end{enumerate}
The masses taken from the respective tails are
\[\frac{\(1-2m\)\abs{Q_{M,n}-Q_0}}{Q_{H,n}-Q_0}\AND \frac{\(1-2m\)\abs{Q_{M,n}-Q_0}}{Q_0-Q_{L,n}}.\]
Since $Q_{M,n}=Q_0+\calo(\e_n)$, these masses converge to zero for all sufficiently large $n$.  Consequently, this pooling is possible for all sufficiently large $n$.  We write the resulting distribution as 
\[F_n^{(3)} = \pi_{L,n}\d_{Q_{L,n}} + \pi_{0,n}\d_{Q_0} + \pi_{H,n}\d_{Q_{H,n}},\qquad\text{where }\pi_{L,n}+\pi_{0,n} + \pi_{H,n}=1.\]
Because $F_n^{(3)}$ is a mean-preserving contraction of $\hat F_n^{(3)}$, it is feasible by \Cref{ax:randomization}.  Moreover, for all sufficiently large $n$, it has exactly three tiers with strictly positive masses.

Now, consider the mechanism implementing $F_n^{(3)}$.  Let $r_{L,n}\coloneq G^{-1}(\pi_{L,n})$ and $r_{H,n}\coloneq G^{-1}(1-\pi_{H,n})$ denote the two threshold types.  By construction,
\[r_{L,n}\to G^{-1}(m)\AND r_{H,n}\to G^{-1}(1-m).\]
The implementing mechanism yields a strict Pareto improvement for all sufficiently large $n$.  By \cref{eq:middle-tier-gain}, the intermediate-tier payment $p_{0,n}$ satisfies
\[-p_{0,n}=\pi_{H,n}r_{H,n}\(Q_{H,n}-Q_0\)-\pi_{L,n}r_{L,n}\(Q_0-Q_{L,n}\).\]
It follows that
\[\frac{-p_{0,n}}{\e_n}\to m\left[G^{-1}(1-m)h(1)-G^{-1}(m)\right]>0.\]
Hence, the intermediate tier is subsidized for all sufficiently large $n$.  By the argument in \Cref{sec:proofs}, $F_n^{(3)}$ therefore yields a three-tier strict Pareto improvement.

\subsection{Proof of \texorpdfstring{\Cref{lem:axiom}}{Lemma 1}}

In this appendix, we prove that \Cref{ax:g-smoothness}\textsuperscript{*} is weaker than \Cref{ax:smoothness}.

\begin{lemma}\label{lem:axiom}
Suppose \Cref{ax:randomization,ax:friction,ax:smoothness} hold.  Then \Cref{ax:g-smoothness}\textsuperscript{*} is satisfied by choosing $\kappa\equiv 1$.
\end{lemma}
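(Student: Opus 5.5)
The plan is to check the two conditions of \Cref{ax:g-smoothness}\textsuperscript{*} with $\kappa\equiv1$ separately. Condition~\emph{\ref{it:g-2}} is immediate. It reads $G^{-1}(1)>G^{-1}(0)$, that is, $\BAR r>\und r$, which holds because $G$ has a positive density on a nondegenerate interval. All the work is in condition~\emph{\ref{it:g-1}}. Fix a bounded nondecreasing $h$ with $\int_0^1h<0$, and write $\beta\coloneq-\int_0^1h>0$ and $M\coloneq\sup_u\abs{h(u)}$. For $t>0$, let $H_t$ denote the law of $Q_0+t\,h(U)$ with $U$ uniform on $[0,1]$. Since $h$ is nondecreasing, $H_t^{-1}(u)=Q_0+t\,h(u)$ for almost every $u$. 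So if I can find $t_n\downarrow0$ with $H_{t_n}\in\calF$, setting $\e_n\coloneq t_n$ and $F_n\coloneq H_{t_n}$ makes the $L^1$ distance in condition~\emph{\ref{it:g-1}} exactly zero.

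To produce such $t$, I take a sequence $\e_k\downarrow0$ and use \Cref{ax:smoothness} to get $F_k\in\calF$ with $a_k\coloneq\E_{F_k}[(Q-Q_0)_+]>0$ and $\delta_k\coloneq Q_0-\E_{F_k}[Q]\le\e_k a_k$. Because $a_k>0$, $F_k\neq\d_{Q_0}$, so \Cref{ax:friction} gives $\delta_k>0$. This is the only place friction is used, and it ensures the scale $t_k$ below is strictly positive. I then collapse $F_k$ onto two points by pooling the mass above $Q_0$ and the mass at or below $Q_0$ separately. This yields
\[D_k=\pi_k^+\d_{x_k^+}+(1-\pi_k^+)\d_{x_k^-},\qquad x_k^+=Q_0+\frac{a_k}{\pi_k^+},\qquad x_k^-=Q_0-\frac{a_k+\delta_k}{1-\pi_k^+},\]
where $\pi_k^+=\Pr_{F_k}[Q>Q_0]$. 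Conditional-mean pooling is a mean-preserving contraction, so $D_k\in\calF$ by \Cref{ax:randomization}. Both points lie in $[0,1]$ because they are conditional means of $F_k$.

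Next I match means by setting $t_k\coloneq\delta_k/\beta$. Then $\E_{H_{t_k}}[Q]=Q_0-t_k\beta=\E_{D_k}[Q]$. I will use the standard fact that any distribution with the same mean as a two-point distribution and support inside its convex hull is a mean-preserving contraction of it. One can check this directly through the convex-order/integrated-quantile criterion in the paper's definition of $\MPC$, or through Strassen's theorem, since each point of the hull splits into the two endpoints with the right conditional mean. The support of $H_{t_k}$ lies in $[Q_0-t_kM,\,Q_0+t_kM]$. Since $\pi_k^+\le1$ and $1-\pi_k^+\le1$, this interval is inside $[x_k^-,x_k^+]$ as soon as $t_kM\le a_k$. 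That condition is equivalent to $\delta_kM/\beta\le a_k$, and it holds whenever $\e_k\le\beta/M$, hence for all large $k$. For those $k$, $H_{t_k}\in\MPC(D_k)\subseteq\calF$ by \Cref{ax:randomization}, and $H_{t_k}\in\Delta([0,1])$ because its support lies in $[x_k^-,x_k^+]$.

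Finally, $t_k=\delta_k/\beta\le\e_ka_k/\beta\le\e_k/\beta\to0$, and each $t_k>0$, so I can pass to a strictly decreasing subsequence. That gives the required $\{(\e_n,F_n)\}$. I expect the main obstacle to be the containment step: showing that the rescaled copy of $h$ is dominated in convex order by the feasible distribution. Comparing $H_t$ with $F_k$ directly would require pointwise integrated-quantile inequalities that could fail if the dispersion of $F_k$ sits in awkward places. Reducing first to the two-point extreme distribution $D_k$ turns the requirement into a simple support-containment check, and the ratio bound $\delta_k\le\e_ka_k$ from \Cref{ax:smoothness} is exactly what makes that check work.
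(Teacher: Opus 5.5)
Your proof is correct. Its skeleton matches the paper's: $\kappa\equiv1$; condition (ii) from $\BAR r>\und r$; the scale $\e_n=(Q_0-\E_{\hat F_n}[Q])/(-\int_0^1h)$ chosen to match means; friction used only to make that scale positive; and the target distribution with quantile function $Q_0+\e_n h$, so that the $L^1$ distance is identically zero.

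Where you differ is in checking that this target is a mean-preserving contraction of a feasible distribution.

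The paper compares it \emph{directly} with the distribution $\hat F_n$ given by smoothness, using stop-loss transforms. For $t$ between the extreme support points of the target, $\E_{F_n}[(Q-t)_+]\le 2\e_n\norm{h}_\infty$. Because $t\mapsto(x-t)_+$ is $1$-Lipschitz, $\E_{\hat F_n}[(Q-t)_+]\ge\E_{\hat F_n}[(Q-Q_0)_+]-\e_n\norm{h}_\infty$. The smoothness ratio then forces $\e_n/\E_{\hat F_n}[(Q-Q_0)_+]\to0$. Points $t$ outside the support are handled by equality of means.

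You instead first pool into the two-point distribution $D_k$, then reduce everything to a support-containment check. This buys a few things. Membership of $H_{t_k}$ in $\Delta([0,1])$ comes for free, since its support lies in $[x_k^-,x_k^+]\subseteq[0,1]$; the paper needs a separate argument that $Q_0\in(0,1)$. You also avoid the case split on $t$, and you get an explicit threshold $\e_k\le\beta/M$. The cost is a second use of randomization plus the two-point convex-order fact, both standard.

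Your closing worry is unfounded: comparing directly with $F_k$ does not fail, precisely because of the Lipschitz bound above. The detour through $D_k$ is a convenience, not a necessity.

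One small point to state explicitly: $x_k^-$ is well defined because $\pi_k^+<1$, which follows from $\E_{F_k}[Q]<Q_0$.
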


Suppose that \Cref{ax:randomization,ax:friction,ax:smoothness} hold.  Set $\kappa\equiv1$; this function is continuous and strictly positive.  Under this choice of $\kappa$, condition~\emph{\ref{it:g-2}} holds:
\[\frac{G^{-1}(1)}{\kappa(1)}=\BAR r>\und r=\frac{G^{-1}(0)}{\kappa(0)}.\]
To verify condition~\emph{\ref{it:g-1}}, fix any bounded nondecreasing function $h:[0,1]\to\R$ such that
\[\int_0^1h(u)\,\dd u<0.\]
By \Cref{ax:smoothness}, for every $n$, there exists $\hat F_n\in\calF$ such that
\[\E_{\hat F_n}[\(Q-Q_0\)_+]>0\AND\frac{Q_0-\E_{\hat F_n}[Q]}{\E_{\hat F_n}[\(Q-Q_0\)_+]}\leq\frac1n.\]
Clearly, $\hat F_n\ne\d_{Q_0}$; thus, \Cref{ax:friction} implies that $Q_0-\E_{\hat F_n}[Q]>0$.  Since qualities lie in $[0,1]$,
\[0<Q_0-\E_{\hat F_n}[Q]\leq\frac1n\E_{\hat F_n}\!\left[(Q-Q_0)_+\right]\leq\frac1n.\]
Define
\begin{equation}\label{eq:eps}
\e_n\coloneq\frac{Q_0-\E_{\hat F_n}[Q]}{-\int_0^1h(u)\,\dd u}>0.
\end{equation}
Passing to a subsequence and relabeling if necessary, we may take $\e_n\downarrow0$.

Next, observe that $Q_0\in(0,1)$.  Indeed, if $Q_0=1$, then $\E_F[\(Q-Q_0\)_+]=0$ for every $F\in\calF$, which contradicts \Cref{ax:smoothness}.  In addition, if $Q_0=0$, any distribution satisfying $\E_F[\(Q-Q_0\)_+]>0$ would have strictly positive mean---and hence higher average quality than the equity benchmark, which contradicts the definition of $Q_0$.

Because $h$ is bounded and nondecreasing, for all sufficiently large $n$ there exists $F_n\in\Delta([0,1])$ whose quantile function satisfies $F_n^{-1}(u)=Q_0+\e_nh(u)$, almost everywhere.
By construction, $F_n$ shares the same mean with $\hat F_n$:
\[\E_{F_n}[Q]=Q_0+\e_n\int_0^1h(u)\,\dd u=\E_{\hat F_n}[Q].\]

We now show that $F_n$ is a mean-preserving contraction of $\hat F_n$ for all sufficiently large $n$.  To see this, fix any point $t$ between the lowest and highest support points of $F_n$.  Then
\[\E_{F_n}[\(Q-t\)_+]\leq 2\e_n\norm{h}_\infty \AND\E_{\hat F_n}[\(Q-t\)_+]\geq\E_{\hat F_n}[\(Q-Q_0\)_+]-\e_n\norm{h}_\infty.\]
Moreover, \cref{eq:eps} implies that
\[\frac{\e_n}{\E_{\hat F_n}[\(Q-Q_0\)_+]}=\frac{Q_0-\E_{\hat F_n}[Q]}{-\int_0^1h(u)\,\dd u\cdot\E_{\hat F_n}[\(Q-Q_0\)_+]}\leq\frac1{-n\int_0^1h(u)\,\dd u}\to0.
\]
Hence, for all sufficiently large $n$ and for all $t$ lying between the lowest and highest support points of $F_n$,
\[\E_{F_n}[\(Q-t\)_+]\leq\E_{\hat F_n}[\(Q-t\)_+].\]
If $t$ lies below the support of $F_n$, then equality of means gives
\[\E_{F_n}[\(Q-t\)_+]=\E_{F_n}[Q]-t=\E_{\hat F_n}[Q]-t\leq\E_{\hat F_n}[\(Q-t\)_+].\]
If $t$ lies above the support of $F_n$, then the same inequality holds trivially:
\[\E_{F_n}[\(Q-t\)_+]=0\leq \E_{\hat F_n}[\(Q-t\)_+].\] 
Thus, this inequality holds for every $t\in\R$; hence, $F_n\in\MPC(\hat F_n)$.  Since $\hat F_n$ is feasible, \Cref{ax:randomization} implies that $F_n$ is feasible as well.

Finally, observe that for all sufficiently large $n$, our definition of $F_n$ satisfies
\[\int_0^1\left\lvert\frac{F_n^{-1}(u)-Q_0}{\e_n}-h(u)\right\rvert\,\dd u=0.\]
Passing to a subsequence and relabeling if necessary, we obtain the sequence $\{(\e_n,F_n)\}_{n=1}^\infty$ required by condition~\emph{\ref{it:g-1}}.  Hence, \Cref{ax:g-smoothness}\textsuperscript{*} holds with $\kappa\equiv1$. 
\end{document}

%% file: working_arxiv.tex
\usepackage{amssymb,amsthm,mathtools,bbm}
\usepackage{enumitem,cite}
\usepackage[longnamesfirst]{natbib}
\usepackage{multirow,abstract}
\usepackage{framed,mdframed,cancel,setspace}
\usepackage[bottom,hang]{footmisc}
\usepackage{graphicx,tikz}
\usepackage[colorlinks=true,allcolors=NavyBlue,citecolor=NavyBlue]{hyperref}
\hypersetup{breaklinks=true}
\usetikzlibrary{calc,positioning}
\usepackage[font={small,it}]{caption}
\usepackage[noabbrev]{cleveref}
\usepackage[T1]{fontenc}

\usepackage{breakurl}

\newtheoremstyle{ans}{15pt}{20pt}{}{}{\bfseries}{}{ }{\thmname{#1}\thmnote{ #3}.}
\theoremstyle{ans}
\newtheorem*{soln}{Solution}

\newcounter{lemno}
\newcounter{app_lemno}
\newcounter{propno}
\newcounter{app_propno}
\newcounter{thmno}
\newcounter{clmno}
\newcounter{app_thmno}
\newcounter{defno}
\newcounter{factno}
\newcounter{corno}
\newcounter{assno}
\newcounter{reassno}
\newcounter{examplenum}
\newcounter{remarkno}
\newcounter{axiomno}
\newcounter{reaxiomno}
\newcounter{rethmno}
\newmdenv[backgroundcolor=blue!7,linewidth=0pt]{bBox}
\newmdenv[backgroundcolor=green!7,linewidth=0pt]{gBox}
\newmdenv[backgroundcolor=red!7,linewidth=0pt]{rBox}
\newmdenv[backgroundcolor=gray!7,linewidth=0pt]{grayBox}

\newtheoremstyle{statementStyle}% % Theorem style name
{12pt}% Space above
{12pt}% Space below
{}% % Body font
{}% Indent amount
{}% % Theorem head font
{{\bf .}\;}% Punctuation after theorem head
{0.25em}% Space after theorem head
{{\bf\thmname{#1}\thmnumber{ #2}}% Theorem text (e.g. Theorem 2.1)
\thmnote{ (#3)}} % Optional theorem note
\makeatother

\theoremstyle{statementStyle}

\theoremstyle{plain}
\newtheorem{lemma}[lemno]{Lemma}

\newtheorem{theorem}[thmno]{Theorem}
\newtheorem{axiom}[axiomno]{Axiom}

\newtheorem{corollary}[corno]{Corollary}
\newtheorem{exerciseT}[thmno]{Exercise}

\newtheorem{definition}[defno]{Definition}

\Crefname{assumption}{Assumption}{Assumptions}
\Crefname{axiom}{Axiom}{Axioms}
\newtheorem{proposition}[propno]{Proposition}
\Crefname{proposition}{Proposition}{Propositions}

\Crefname{remark}{Remark}{Remarks}
\newtheorem*{example*}{Example}

\Crefname{claim}{Claim}{Claims}

\newtheoremstyle{nbstatementStyle}% % Theorem style name
{12pt}% Space above
{12pt}% Space below
{\it}% % Body font
{}% Indent amount
{}% % Theorem head font
{{\bf .}\;}% Punctuation after theorem head
{0.25em}% Space after theorem head
{{\bf\thmname{#1}}% Theorem text (e.g. Theorem 2.1)
\thmnote{ (#3)}} % Optional theorem note
\makeatother

\theoremstyle{nbstatementStyle}

\newtheoremstyle{restatementStyle}
  {\topsep}   % ABOVESPACE
  {\topsep}   % BELOWSPACE
  {\itshape}  % BODYFONT
  {0pt}       % INDENT (empty value is the same as 0pt)
  {\bfseries} % HEADFONT
  {.}         % HEADPUNCT
  {5pt plus 1pt minus 1pt} % HEADSPACE
  {\thmname{#1}\thmnumber{ #2\textsuperscript{*}}% Theorem text (e.g. Theorem 2.1)
  \thmnote{ (#3)}} % Optional theorem note
  
\theoremstyle{restatementStyle}

\newtheorem{reaxiom}[reaxiomno]{Axiom}
\newtheorem{retheorem}[rethmno]{Theorem}

\Crefname{reaxiom}{Axiom}{Axioms}
\Crefname{retheorem}{Theorem}{Theorems}

\newtheorem{prob}{Problem}

\newtheorem{conj}{Conjecture}
\newtheorem{prope}{Property}

\newtheoremstyle{reexerciseStyle}% % Theorem style name
{12pt}% Space above
{12pt}% Space below
{}% % Body font
{}% Indent amount
{\bfseries}% % Theorem head font
{{\bf .}\;}% Punctuation after theorem head
{0.25em}% Space after theorem head
{\thmname{#1}\thmnumber{ #2}% Theorem text (e.g. Theorem 2.1)
\thmnote{ (#3)}} % Optional theorem note
\makeatother

\newtheoremstyle{exerciseStyle}% % Theorem style name
{12pt}% Space above
{12pt}% Space below
{}% % Body font
{}% Indent amount
{\bfseries}% % Theorem head font
{{\bf .}\;}% Punctuation after theorem head
{0.25em}% Space after theorem head
{\thmname{#1}\thmnumber{ #2}% Theorem text (e.g. Theorem 2.1)
\thmnote{ (#3)}} % Optional theorem note
\makeatother

\theoremstyle{exerciseStyle}
\newtheoremstyle{teach}{15pt}{\topsep}{}{}{\itshape}{}{ }{\thmname{#1}\thmnote{ #3}.}
\theoremstyle{teach}
\newtheorem*{prooflemma}{Proof of lemma}

\newtheoremstyle{rem}{5pt}{0pt}{\color{black}}{}{\bfseries}{}{ }{\thmname{#1}\thmnote{ #3}.}
\theoremstyle{rem}

\definecolor{darkgreen}{rgb}{0.0, 0.75, 0.2}
\makeatletter
\renewcommand*\env@matrix[1][*\c@MaxMatrixCols c]{%
  \hskip -\arraycolsep
  \let\@ifnextchar\new@ifnextchar
  \array{#1}}
\makeatother
\renewcommand{\d}{\delta}

\newcommand{\dd}{\mathrm{d}}

\newcommand{\la}{\lambda}

\newcommand{\e}{\varepsilon}

\newcommand{\R}{\mathbf{R}}

\newcommand{\x}{\mathbf{x}}

\newcommand{\BAR}{\overline}

\newcommand{\und}[1]{\underline{#1}}

\DeclarePairedDelimiter{\norm}{\lVert}{\rVert} 
\DeclarePairedDelimiter{\abs}{\lvert}{\rvert}

\newcommand{\AND}{\quad\text{and}\quad}

\renewcommand{\(}{\left(}
\renewcommand{\)}{\right)}

\newcommand{\bs}{{\boldsymbol{\s}}}

\DeclareMathOperator{\supp}{supp}

\DeclareMathOperator{\E}{\mathbf{E}}

\def\calF{\mathcal{F}}

\usepackage{datetime}

\newdateformat{monthyeardate}{%
  \monthname[\THEMONTH] \THEYEAR}
\newdateformat{daymonthyeardate}{%
  \monthname[\THEMONTH] \THEDAY, \THEYEAR}

\title{\LARGE\hspace*{\fill}\\[1ex]\papertitle\\[2ex]}
\author{\large\name\\[1ex] \affiliation}
\date{\monthyeardate\today}

\emergencystretch=\maxdimen
\makeatletter
\def\@xfootnote[#1]{%
  \protected@xdef\@thefnmark{#1}%
  \@footnotemark\@footnotetext}
\makeatother

\makeatletter
\ifFN@para
\else
  \long\def\@makefntext#1{%
    \ifFN@hangfoot
      \bgroup
      \setbox\@tempboxa\hbox{%
        \ifdim\footnotemargin>0pt
          \hb@xt@\footnotemargin{\@makefnmark\hss}%
        \else
          \@makefnmark\hskip-\footnotemargin      %%Changed here
        \fi
      }%
      \leftmargin\wd\@tempboxa
      \rightmargin\z@
      \linewidth \columnwidth
      \advance \linewidth -\leftmargin
      \parshape \@ne \leftmargin \linewidth
      \footnotesize
      \@setpar{{\@@par}}%
      \leavevmode
      \llap{\box\@tempboxa}%
      \parskip\hangfootparskip\relax
      \parindent\hangfootparindent\relax
    \else
      \parindent1em
      \noindent
      \ifdim\footnotemargin>\z@
        \hb@xt@ \footnotemargin{\hss\@makefnmark}%
      \else
        \ifdim\footnotemargin=\z@
          \llap{\@makefnmark}%
        \else
          \llap{\hb@xt@ -\footnotemargin{\@makefnmark\hss}}%
        \fi
      \fi
    \fi
    \footnotelayout#1%
    \ifFN@hangfoot
      \par\egroup
    \fi
  }
\fi
\makeatother

\makeatletter
\def\@endtheorem{\endtrivlist}% NEW
\makeatother